\newtheorem{theorem}{Theorem}
\newtheorem{definition}{Definition}
\newtheorem{lemma}[theorem]{Lemma}
\newtheorem{corollary}[theorem]{Corollary}
\DeclareMathOperator*{\argmin}{arg\,min} 
\newcommand{\dom}[2]{#1 \,\prec_\text{dom}\, #2} 
\newcommand{\arc}{edge\xspace} 
\newcommand{\arcs}{edges\xspace}
\newcommand{\vertices}{\mathcal{V}}
\newcommand{\edges}{\mathcal{E}}
\newcommand{\costs}{\mathcal{C}}
\newcommand{\vectors}[1]{\mathbb{R}^{#1}_+}
\newcommand{\paths}{\mathcal{S}}
\newcommand{\indices}[1]{\mathcal{I}}
\newcommand{\locally}[1]{#1}
\newcommand{\globally}[1]{globally #1}
\newcommand{\gglobal}[1]{global #1}
\newcommand{\Global}[1]{Global #1}
\newcommand{\nodeskyline}[1]{\mathcal{S}(#1)}
\newcommand{\succname}{succ}
\newcommand{\succiname}[1]{\succname{}_i}
\newcommand{\succi}[1]{\succiname{}(#1)}
\newcommand{\pathcost}[1]{cost(#1)}
\newcommand{\edgecost}[2]{cost(#1,#2)}
\newcommand{\open}{open}
\newcommand{\nondominated}{skyline} 
\newcommand{\lbname}{lb}
\newcommand{\lb}[1]{\lbname{}(#1,t)}
\newcommand{\lbbetaname}{{\lbname{}}^{\beta}}
\newcommand{\lbbeta}[1]{\lbbetaname{}(#1)}
\newcommand{\lbcost}[2]{lb(#1,#2)}
\renewcommand{\time}{dur}
\newcommand{\distance}{len}
\newcommand{\energy}{ener}
\newcommand{\crossings}{cros}
\newcommand{\timelights}{{dur}$_{\text{p}}$}
\newcommand{\crsep}{/}
\newcommand{\criteria}[1]{{\scriptsize #1}}
\newcommand{\captiontext}[2]{#1 of #2.}
\newcommand{\comparedalgorithmsbi}{previous approach DD and proposed approaches PP and BPP}
\newcommand{\comparedalgorithmsmulti}{previous approach MD and proposed approaches PP and BPP}
\newcommand{\compareruntime}{Runtime comparison}
\newcommand{\comparenodes}{Search area comparison}
\newcommand{\munichtasks}{702 Munich routing tasks}
\newcommand{\bavariatasks}{90 Bavaria routing tasks}
\newcommand{\td}{\criteria{\time{}\crsep{}\distance{}}}
\newcommand{\te}{\criteria{\time{}\crsep{}\energy{}}}
\newcommand{\tc}{\criteria{\time{}\crsep{}\crossings{}}}
\newcommand{\tl}{\criteria{\time{}\crsep{}\timelights{}}}
\newcommand{\tel}{\criteria{\time{}\crsep{}\energy{}\crsep{}\timelights{}}}
\newcommand{\tce}{\criteria{\time{}\crsep{}\energy{}\crsep{}\crossings{}}}
\newcommand{\tdc}{\criteria{\time{}\crsep{}\distance{}\crsep{}\crossings{}}}
\newcommand{\tdecl}{\criteria{all 5 criteria}}
\newcommand{\vecmax}[2]{\text{max}\,(#1,\, #2)}
\newcommand{\LC}{LCS }
\newcommand{\LCSS}{$\text{LCS}_\text{ss}$ }
\newcommand{\REFLC}[1]{REF$_{#1}$+LCS}
\newcommand{\MDLC}{MD+LCS}
\newcommand{\PPLC}{PP+LCS}
\newcommand{\solgr}{G_\text{SOL}}
\newcommand{\ppgr}{G_\text{PP}}
\newcommand{\soledges}{\edges_\text{SOL}}
\newcommand{\ppedges}{\edges_\text{PP}}
\title{ParetoPrep: Fast computation of Path Skylines Queries}
\author{
Michael Shekelyan \quad
Gregor Joss\'e \quad
Matthias Schubert
\\
 \affaddr{Institute for Informatics, Ludwig-Maximilians-Universit\"at M\"unchen}\\
       \affaddr{Oettingenstr.\ 67}\\
       \affaddr{D-80538 Munich, Germany}\\
       \affaddr{\{shekelyan,josse,schubert\}@dbs.ifi.lmu.de}
}
\begin{document}

\maketitle

\begin{abstract}
Computing cost optimal paths in network data is a very important task in many
application areas like transportation networks, computer networks or social
graphs. In many cases, the cost of an edge can be described by various cost 
criteria. For example, in a road network possible cost criteria are distance,
time, ascent, energy consumption or toll fees. In such a multicriteria network,
a route or path skyline query computes the set of all paths having pareto
optimal costs, i.e. each result path is optimal for different user preferences.
In this paper, we propose a new method for computing route skylines which
significantly decreases processing time and memory consumption. Furthermore, our
method does not rely on any precomputation or indexing method and thus, it is
suitable for dynamically changing edge costs. Our experiments
demonstrate that our method outperforms state of the art approaches
and allows highly efficient path skyline computation without any preprocessing.
\end{abstract}

\section{Introduction}
In recent years, querying network data has become more and more important in
many application areas like transportation systems, the world wide web, computer networks
or social graphs.
One of the most important tasks in network data is computing cost optimal
paths between two nodes. Especially in transportation networks and
computer networks finding shortest or cost optimal paths is essential for
optimizing the movement of objects or information. An optimal path can depend on
multiple cost criteria and a network considering different cost dimensions for
each link is called a multicost or multicriteria network. 
In road networks, possible optimization criteria are travel time,
travel distance, toll fees, environmental hazards or energy consumption.
In computer networks, typical cost criteria are bandwidth, rental cost and
current traffic.

A simple solution to compute optimal paths in multicriteria networks
is to combine all cost values into a single optimization criterion. However,
finding a suitable combination method is a difficult task because a good choice
might depend on personal preferences and the current query context.

An alternative approach is to compute all pareto optimal paths,
i.e. all paths that are potentially cost optimal under any cost
combination. In the database community, this task is generally known as skyline
query \cite{BorKosSto01} and when looking for routes in a network as route
skyline query \cite{KriRenSch10}. In other communities like Artificial Intelligence and
Operations Research the latter task is known as multicriteria shortest path
optimization \cite{classification}.

One of the most efficient approaches to answering route skyline queries is the
label correcting approach which exploits the pareto optimality of any subpath
of a result path. However, for large graph this pruning mechanismen is not
enough to ensure efficient computation because the majority of the network has
to be visited. To efficiently compute path skylines in large networks, it is
necessary to direct the search towards the target, e.g. by using lower bound
approximations for the remaining cost in each criteria in a similar way as
A*-search does in the single criterion case.

For example, a lower bound approximation for network distance is the Euclidean distance.
Considering multiple and arbitrary cost criteria requires a more general approach for acquiring lower
bound approximations. One solution is precalculating bounds like a Reference
Node Embedding \cite{GolHar04}. However, precalculated bounds have several drawbacks.
Firstly, the approximation quality is often insufficient for the majority of queries.
Secondly, they typically require a lot of memory compared to the size of
the network. And finally, in dynamic networks -- where edge costs vary over time
-- the approximation may lose its bounding properties.

Alternatively, \cite{MultiDijkstra} makes use of the following skyline property:
For every particular cost criterion there exists an optimal path and each 
nondominated path cannot have a smaller cost value for this criterion.
Now, by performing a single-source all-target Dijkstra search, it is possible
to compute the single criterion optimal paths from each node in the network
to the target node and thus, derive lower bounds for each cost value w.r.t. the
considered cost criterion. Performing this search for all cost
criteria, computes a vector of lower bounds for each node. 
Although the effort is large, the quality of the bounds compensates for the overhead
when computing the path skyline later on. However, the runtime degenerates for large
graphs and multiple criteria because the entire graph is visited once for
each criterion.

To avoid the drawbacks of both approaches, we propose a new method,
\emph{ParetoPrep}, which strongly increases the efficiency even for multiple
cost criteria. Our method does not rely on preprocessing, hence it is applicable
to networks with dynamically changing costs. 
ParetoPrep computes tight lower bound approximations which yield great pruning power
and low runtime. Additionally, ParetoPrep finds the shortest paths between a start and a target node
with respect to all $d$ cost criteria within a single graph traversal.
When traversing the graph ParetoPrep visits only a small part of the network,
keeping the search local. To even further reduce the explored part of the graph,
we introduce a bidirectional optimization of ParetoPrep.

In our experimental evaluation, we show that our new approach outperforms the state of the art
in computing route skylines, i.e. the above mentioned approach of computing lower bounds
through multiple Dijkstra searches \cite{MultiDijkstra} as well as the ARSC \cite{KriRenSch10}. 
As test network, we use the open street map road network of the German
federal state of Bavaria and examine searches for up to five different cost criteria. 
To conclude, the contributions of this paper are:
\begin{itemize}

  \item We introduce an algorithm, ParetoPrep, which computes tighter lower bound cost approximations than any state of the art approach.
  \item We show that ParetoPrep outperforms any comparable algorithm in terms of runtime, memory space and effectiveness.
	\item We present a complete algorithm for answering path skyline queries on large
			and possibly dynamic network data and a bidirectional optimization of
			this algorithm.
\end{itemize}

The rest of the paper is organized as follows:
In Section~\ref{section:relatedwork} we present related work in the area of
path skyline computation and routing in large networks. Section~\ref{section:preliminaries}
 provides basic notations, concepts and
formalizes path skyline queries. ParetoPrep, our new algorithm,
is presented in Section~\ref{section:contribution}. Additionally, the section
contains formal proofs concerning the correctness of ParetoPrep.
Section~\ref{section:experiments} describes the results of our experiments,
demonstrating the various benefits of ParetoPrep.
The paper concludes with a summary and directions for future
work in Section~\ref{section:summary}.
\section{Related Work}\label{section:relatedwork}
The task of finding all pareto optimal elements within a vector space
was introduced to the database community as skyline operator in \cite{BorKosSto01}.
Since then, multiple methods for computing
skylines in database systems on sets of (cost) vectors
followed \cite{TanEngOoi01,KosRamRos02,PapTaoFuSee03}. 

With regard to network data and especially spatial networks there exist some approaches 
for computing skylines on landmarks or other points of interest. 
In \cite{DenZhoShe07} the authors introduce a method for calculating a skyline of
landmarks in a road network which are compared w.r.t. their network distance to
several query objects. In \cite{HuaJen04} the authors propose in-route skyline
processing in road networks. Assuming that a user is moving along a predefined
route to a known destination, the algorithm processes minimal
detours to sets of landmarks being distributed along the path. In
\cite{JanYoo08} the authors discuss continuous skyline queries in
road networks, i.e. a moving user queries for a skyline of points of interest.

The task of computing a skyline of pareto optimal paths between two nodes is 
referred to as route skyline query in \cite{KriRenSch10}. However, in accordance
the graph theoretical terminology, we refer to it as path skyline query.
In the following, we will discuss solutions from other areas of research and discuss the differences between the
algorithm ARSC proposed in \cite{KriRenSch10} and alternative approaches.

In Operations Research, the problem of finding complete path skylines is known 
as the Multiobjective Shortest Path problem and surveys on existing solutions 
to this problem can be found in \cite{classification,ehrgott, tarapata, ehrgott2000survey}.
Early on, \cite{hansen} proved that the size of the path skyline may
increase exponentially with the number of hops between start and target node,
and that the problem therefore is NP-hard. More recently, \cite{feasible} showed 
that the number of routes is in practice feasibly low when using strongly correlated cost criteria.

The potential complexity of the problem motivated solutions which approximate
the pareto front. 
A fully
polynomial-time approximation scheme is provided by \cite{hansen}, \cite{fptas} and \cite{fptas2}.
Many approximative solutions are also based on genetic algorithms \cite{surveygenetic}.

Concerning exact solutions, the most common approach are
labeling algorithms which label nodes with the cost vectors of assembled paths ending at that node.
They start with paths consisting of the outgoing edges of the start node,
and in each iteration extend all previously
assembled paths that may be part  of a skyline path and terminate once all
assembled paths were either extended or pruned. Paths can be pruned if they are dominated
by a path with the same terminal node because all subpaths of nondominated
paths are also nondominated. We will refer to this type of pruning as \emph{local domination}.
To perform local domination checks, a route skyline is maintained for each node, which
consists of all so far nondominated routes ending at that node.

Labeling approaches can be subdivided into label setting and label correcting approaches.
In each iteration, label setting approaches like Martin's algorithm \cite{martins} extend 
a nondominated path and therefore have a well-defined worst case performance, limited
by the number of nondominated paths. In contrast, label correcting approaches do not 
necessarily extend nondominated paths which offers more flexibility. For
instance, the label correcting approach allows to extend the complete route
skyline of a node in each iteration, instead of separately extending particular
paths.

In the special case of two cost criteria, node skylines can be sorted to improve the
efficiency of local domination checks. Two dimensional skyline cost vectors sorted 
in ascending order by the first criterion are simultaneously sorted in descending order  
w.r.t. to the second criterion. When the route skyline of node $u$ is extended with 
edge $(u,v)$, it has to be merged with the set of nondominated paths ending at $v$. 
\cite{brumbaugh} shows how to efficiently solve the problem of merging two sorted route skylines and
\cite{skriver2000} describes an efficient method to check if one skyline
completely dominates the other making the merging step trivial.

The labeling algorithms cited so far solve the single-source problem of
finding all nondominated paths from $s$ to all other nodes. In \cite{moa},
it is shown that when solving the point-to-point problem
from a start node $s$ to  a target node $t$ all partially or fully extended
paths dominated by a path from $s$ to $t$ can be pruned. We will refer to this
type of pruning as global domination.
Global domination allows to process only a part of the graph and significantly
reduces the number of considered paths to compute the path skyline. Thus, to
answer path skyline queries in large graphs \gglobal{domination} should be employed

to direct the search. \cite{moa} also showed that lower bound cost estimations of fully extended paths
can be utilized to considerably improve the pruning power of \gglobal{domination} and can be used to prioritize the extension of more promising paths.

For cost criteria other than distance, lower bound cost estimations are 
usually acquired through some kind of preprocessing.
In ARSC \cite{KriRenSch10}, lower bounds are provided by a Reference Node
Embedding computed before any query is posed. However, a
reference point embedding usually does not yield good lower bounds for
all possible queries. Furthermore, the memory consumption is rather large and
any precomputed information has to be checked for validity in case of dynamically
changing edge costs. Another approach is to compute lower bound costs 
individually for each query \cite{machuca2012}.
Such lower bound cost computation may be combined with
computing the shortest paths for each criterion which can then be used for
\gglobal{domination} checks. Tung and Chew \cite{MultiDijkstra} proposed to
perform a single-source all-target Dijkstra search for each cost criterion on
the graph in reverse direction \cite{dijkstra} to find the
costs of the shortest paths from all other nodes in the graph to the target node $t$. We will refer to this
approach as Multidijkstra (MD). A major shortcoming of this approach is that it 
processes the whole graph for every cost criterion.

In the special case of two cost criteria, Machuca and Mandow
\cite{doubledijkstra} propose a termination condition which makes use of
special properties of two dimensional skylines. We would like to stress that
this method is not applicable to the case of more than two cost criteria. Since the algorithm in
\cite{doubledijkstra} uses two Dijkstra searches, we shall refer to it as 
Doubledijkstra (DD).
In \cite{YanGuoJenetAl14} the authors compute a set of optimal paths in
time-dependent, uncertain, multicriteria networks. The presented model
allows the complex representation of a dynamic network model. However, the lower bound
computation relies on the same undirected search as in \cite{MultiDijkstra}.

It should be stressed that MD and DD do not perform point-to-point shortest path searches.
Therefore, speed-up techniques for point-to-point searches are not applicable.

The methods presented in this paper use significantly less runtime and
memory space compared to the MD and DD approaches by restricting the visited part of the
network and deriving the required lower bounds in a single graph traversal.
Thus, Pareto Prep can be combined with any search algorithm requiring time and memory
efficient computation of lower bound costs for multiple attributes.

\section{Preliminaries}
\label{section:preliminaries}
\subsection{Route Skyline Queries}

A multicost network is represented by a directed weighted graph
$G(\vertices,\edges,\costs)$ comprised of a set of nodes $\vertices$ and a set
of directed \arcs{} $\edges \in \vertices \times \vertices$. Each \arc $(n,m) \in \edges$ is
labeled with a cost vector $\edgecost{n}{m} \in \costs \subset \vectors{d}$
which consists of the costs for traversing \arc  $(n,m)$ w.r.t. each of the $d$ cost criteria
(in mathematical context also referred to as cost dimensions). 
If there exists an \arc $(n,m)$ then $n$ and $m$ are neighboring nodes and $(n,m)$ shall be 
called an outgoing \arc of $n$ and an incoming \arc of $m$. 

A sequence of adjacent edges connecting
two nodes $s$ and $t$, $w=((s,n_1),(n_1,n_2),\ldots,(n_k,t))$ , 
is called a way from $s$ to $t$. If $w$ does not visit any
node twice it is called a path or a route.
The cost of a path $p$ in the $i$-the cost dimension is the sum of its individual \arc costs:
$$\pathcost{p}_i = \sum_{(n,m) \in p} \edgecost{n}{m}_i $$

A cost vector $a \in \vectors{d}$ dominates a cost vector $b \in
\vectors{d}$, denoted $\dom{a}{b}$, iff $a$ has a smaller cost value than $b$ in at least one dimension $i$ and $b$ does
not have a smaller cost value than $a$ in any dimension $j$:
$$ \exists_{i \in \{1,\ldots,d\}} : a_i < b_i \ \land \ {\nexists}_{j \in \{1,\ldots,d\}} : a_j > b_j \ :\Leftrightarrow \ \dom{a}{b} $$

For a set of paths $A$, $\dom{A}{b}$ implies that there exists some path $a\in A$ which dominates the path $b$. 
Cost vectors which are not dominated by any other cost vector are called nondominated or 
pareto optimal. The set of nondominated cost vectors
includes all optimal trade-offs between cost criteria and is called the pareto front
or skyline \cite{BorKosSto01}.

\begin{definition}[Path Skyline]
Let $G(\vertices,\edges,\costs)$ be a multicost network, let $s \in
\vertices$ be the start node and let $t \in \vertices$ be the target node. Then,
the set of all paths between $s$ and $t$ whose cost vectors are nondominated is
defined as path skyline $\mathcal{S}(s,t)$.
\end{definition}

The task we examine in this paper is to efficiently compute
$\mathcal{S}(s,t)$. In the following, we always assume a multicost network is given.

\subsection{Label Correction and Pruning}
Our solution employs the label correcting approach and incorporates two domination 
checks in order to prune paths from the search tree. The first type of
domination is local domination which is used to compare paths starting and
ending at the same node.
The second domination relationship is global domination
it describes whether a path still can be extended to a nondominated path between
the start and the target node of the given query.

\begin{definition}[Local Domination]
Let $p$ and $q$ be two paths starting at the same node $s$ and ending at the
same node $t$. Iff $\dom{\pathcost{p}}{\pathcost{q}}$ holds, we refer to $q$
as \locally{dominated} by $p$ : $\dom{p}{q}$.
Correspondingly, $q$ is referred to as \locally{nondominated} iff $\nexists
\ p=((s,n_1),\ldots,(n_k,t))\,: \dom{\pathcost{p}}{\pathcost{q}}$.
\end{definition}

For any path $q=((n_1,n_2),\ldots,(n_{l-1},n_l))$, any subsequence of 
edges $p=((n_i,n_{i+1}),\ldots,(n_{k-1},n_k))$ is called a subpath of $q$.
The following lemma shows that locally dominated paths can be pruned from the
search.

\begin{lemma}[Local Domination Check]\label{lem:localdomination}
Any subpath of a \locally{nondominated} path is \locally{nondominated} (w.r.t. its start and target node). 
\end{lemma}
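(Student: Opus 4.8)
The plan is to argue by contradiction using the additive structure of path costs together with a cut-and-paste exchange. Fix a \locally{nondominated} path $q=((n_1,n_2),\ldots,(n_{l-1},n_l))$ and a subpath $p=((n_i,n_{i+1}),\ldots,(n_{k-1},n_k))$ of $q$ running from $n_i$ to $n_k$. Suppose, for contradiction, that $p$ is \locally{dominated}, i.e.\ there is a path $p'$ from $n_i$ to $n_k$ with $\dom{\pathcost{p'}}{\pathcost{p}}$. Write $q^-=((n_1,n_2),\ldots,(n_{i-1},n_i))$ and $q^+=((n_k,n_{k+1}),\ldots,(n_{l-1},n_l))$ for the prefix and suffix of $q$ around $p$, and let $w'$ be the concatenation $q^-\,p'\,q^+$, which is a way from $n_1$ to $n_l$.

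First I would record the additivity of the cost functional: since $\pathcost{\cdot}_i$ is the sum of \arc costs over the sequence of edges, we have $\pathcost{q}=\pathcost{q^-}+\pathcost{p}+\pathcost{q^+}$ and $\pathcost{w'}=\pathcost{q^-}+\pathcost{p'}+\pathcost{q^+}$ componentwise. Next I would observe that the domination relation $\prec_\text{dom}$ is preserved under adding a common vector $c=\pathcost{q^-}+\pathcost{q^+}\in\vectors{d}$: if $\pathcost{p'}_m<\pathcost{p}_m$ for some $m$ and $\pathcost{p'}_j\le\pathcost{p}_j$ for all $j$, then $(\pathcost{p'}+c)_m<(\pathcost{p}+c)_m$ and $(\pathcost{p'}+c)_j\le(\pathcost{p}+c)_j$ for all $j$, i.e.\ $\dom{\pathcost{w'}}{\pathcost{q}}$.

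The one technical obstacle is that $w'$ need not be a \emph{path}: $p'$ may revisit a node already used in $q^-$ or $q^+$, so $w'$ can contain cycles. Here I would use that all \arc costs lie in $\vectors{d}$, i.e.\ are nonnegative: repeatedly deleting the edges of any cycle from $w'$ yields a way whose cost is componentwise no larger, and after finitely many such deletions we obtain an actual path $q'$ from $n_1$ to $n_l$ with $\pathcost{q'}\le\pathcost{w'}$ componentwise. Combining this with $\dom{\pathcost{w'}}{\pathcost{q}}$ gives $\pathcost{q'}_j\le\pathcost{q}_j$ for all $j$ and $\pathcost{q'}_m\le\pathcost{w'}_m<\pathcost{q}_m$, hence $\dom{\pathcost{q'}}{\pathcost{q}}$. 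Since $q'$ is a path with the same start and target as $q$, this contradicts the \locally{nondomination} of $q$, completing the proof. I expect the cycle-removal step to be the only subtle point; everything else is bookkeeping with the additive cost and the definition of $\prec_\text{dom}$.
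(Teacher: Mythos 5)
Your proof is correct and follows essentially the same cut-and-paste exchange argument as the paper: replace the dominated subpath by a dominating one, use additivity of the cost vectors, and derive a contradiction with the nondomination of the whole path. The only difference is that you explicitly handle the possibility that the spliced edge sequence is not a simple path, removing cycles via the nonnegativity of edge costs --- a detail the paper's proof passes over silently --- which tightens rather than changes the argument.
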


\begin{proof}

Suppose, a path $p$ had a \locally{dominated} subpath $q$. This subpath could be
substituted with the \locally{dominating} path $q'$. The modified path $p'$ would then
dominate the original path $p$:
$$ cost(p') = cost(p) - cost(q) + cost(q') $$
$$ \dom{cost(q')}{cost(q)} \Rightarrow \dom{cost(p')}{cost(p)} $$
This contradicts the assumption and thereby proves the claim.

\end{proof}

Using local domination allows to decide which paths between node $s$ and some
other node $n$ have to be extended when processing node $n$. However, it is
sufficint to decide whether any path between $s$ and $n$ can be extended into a
nondominated path between $s$ and $t$. To test this characteristic, we 
introduce global domination.

\begin{definition}[Global Domination]
Given a start \\ node $s$ and a destination node $t$, an arbitrary path $p$ is called 
\globally{dominated} iff it is a subpath of a \locally{dominated} path
$q$ between $s$ and $t$. Respectively, any subpath $p$ of a
\locally{nondominated} path $q$ between $s$ and $t$ is called
\globally{nondominated}.
\end{definition}

Due to global domination it is not necessary to extend any path $p$ between $s$
and some other node $n$, if there does not exist any extension $q$ of $p$
ending an the target $t$ and being part of $\mathcal{S}(s,t)$. Without the

concept of global domination an algorithm cannot stop the search until all
path skylines $\mathcal{S}(s,v)$ for all $v \in \vertices$ are calculated.
To exploit the fact that we are only interested in $\mathcal{S}(s,t)$, a simple
global domination check is to compare the cost of any path $p$ to 
$\mathcal{S}(s,t)$. If $\exists q \in \mathcal{S}(s,t): \dom{cost(q)}{cost(p)}$
then $p$ cannot be extended into an element of $\mathcal{S}(s,t)$ because
$cost(p)$ cannot be decreased by extending $p$ for  any cost criterion $i$.
This basic globale domination check is equivalent to the stopping criterion of
Dijkstra's algorithm for single criterion shortest path computation.
However, the check is not sufficient to restrict
the search space for the path skyline to a limited portion of the network. 
To check for global domination in a more effective way, lower bound cost
approximations are employed.
Before explaining the global domination check being employed in this paper, we
need to introduce some notation:\\
$\vecmax{a}{b}$ shall denote the component-wise (criterion-wise) maximum of
cost vectors $a$ and $b$, i.e. $(\vecmax{a}{b})_{1\leq i\leq d} := \max\{a_i,b_i\}$.

\begin{definition}[Lower Bound Costs]
Let $n$ and $m$ be nodes and let $i$ be an arbitrary cost
criterion. If for the real value $\text{lb}(a,b)_i$ holds $\text{lb}(a,b)_i \leq
cost(p)_i$ for any path $p$ connecting $a$ and $b$ then $\text{lb}(a,b)_i$ is called lower
bound for $i$ w.r.t. $a$ and $b$. The vector consisting of the lower bounds of all cost
criteria w.r.t. $a$ and $b$ is denoted as $\lbcost{a}{b}$.
\end{definition}

To compute lower bounds, it is possible to employ the MultiDijkstra (MD)
method which is computing all single criterion optimal paths starting at some
node $n$ and ending at node $t$ as described in section
\ref{section:relatedwork}. In the next section, we will introduce our new method
ParetoPrep which is computing even tighter lower bounds in much more efficient
way. Additionally, to computing lower bound cost both algorithms compute the
optimal paths between the start node $s$ and the target node $t$ for each cost
criterion $i$. Both information are employed to prune paths from the search
space in the following domination check:

\begin{lemma}[Global Domination Check] \label{lemma:globaldominationcheck}
Let $p$ be a path from $n$ to $m$ and $q$ be a path from the start node $s$ to
the target node $t$. If
$$\dom{\pathcost{q}}{\vecmax{\lbcost{s}{n}+\pathcost{p}+\lbcost{m}{t}}{\lbcost{s}{t}}}$$,
then the path $p$ is \globally{dominated}.
\end{lemma}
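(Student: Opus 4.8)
The plan is to show that under the stated domination condition, any extension of $p$ to a path from $s$ to $t$ is locally dominated by $q$, which by the definition of global domination immediately yields that $p$ is globally dominated. So first I would fix an arbitrary extension: let $r$ be a path from $s$ to $t$ that contains $p$ as a subpath, so that $r$ decomposes as a path $p_1$ from $s$ to $n$, followed by $p$ from $n$ to $m$, followed by a path $p_2$ from $m$ to $t$. The goal is then to prove $\dom{\pathcost{q}}{\pathcost{r}}$.

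The key computation is a lower bound on $\pathcost{r}$ componentwise. By additivity of path cost, $\pathcost{r} = \pathcost{p_1} + \pathcost{p} + \pathcost{p_2}$. Since $p_1$ connects $s$ and $n$, the definition of lower bound costs gives $\pathcost{p_1} \geq \lbcost{s}{n}$ componentwise; likewise $\pathcost{p_2} \geq \lbcost{m}{t}$. Hence $\pathcost{r} \geq \lbcost{s}{n} + \pathcost{p} + \lbcost{m}{t}$ in every coordinate. Moreover, $r$ is itself a path from $s$ to $t$, so $\pathcost{r} \geq \lbcost{s}{t}$ componentwise as well. Combining these two lower bounds coordinatewise gives
$$\pathcost{r} \geq \vecmax{\lbcost{s}{n}+\pathcost{p}+\lbcost{m}{t}}{\lbcost{s}{t}},$$
where the inequality is understood componentwise.

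Next I would invoke monotonicity of the domination relation with respect to componentwise order: if $a \leq b$ componentwise and $\dom{c}{a}$, then $\dom{c}{b}$ — this is immediate from the definition, since $c_i < a_i \leq b_i$ preserves the strict inequality in the witnessing coordinate, and $c_j \leq a_j \leq b_j$ preserves the "no coordinate worse" condition. Applying this with $a = \vecmax{\lbcost{s}{n}+\pathcost{p}+\lbcost{m}{t}}{\lbcost{s}{t}}$, $b = \pathcost{r}$, and $c = \pathcost{q}$, the hypothesis of the lemma gives $\dom{\pathcost{q}}{\pathcost{r}}$. Thus every extension $r$ of $p$ to an $s$-$t$ path is locally dominated (by $q$), so $p$ is a subpath of a locally dominated $s$-$t$ path and is therefore globally dominated by definition.

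The only mildly delicate point, and the one I would be careful to state explicitly, is the monotonicity step for $\prec_\text{dom}$: one must check that passing from $a$ to a componentwise-larger $b$ neither destroys the strict inequality in the dominated coordinate nor creates a coordinate in which $b$ beats $q$. Everything else is just additivity of cost and the definition of lower bounds; there are no edge cases with the subpath decomposition since $p_1$ or $p_2$ may be empty, in which case the corresponding lower bound term is the zero vector, consistent with $\lbcost{a}{a} = 0$.
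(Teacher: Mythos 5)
Your proof is correct and follows essentially the same route as the paper's: the term $\lbcost{s}{n}+\pathcost{p}+\lbcost{m}{t}$ componentwise lower-bounds every $s$--$t$ path through $p$, the term $\lbcost{s}{t}$ lower-bounds it again because any such path is itself an $s$--$t$ path, and domination of the componentwise maximum transfers to every actual extension. You are merely more explicit than the paper about the decomposition into $p_1,p,p_2$ and about the monotonicity of $\prec_\text{dom}$ under componentwise increase, which is a welcome tightening rather than a different argument.
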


\begin{proof}
The cost vector $\lbcost{s}{n}+cost(p)+\lbcost{m}{t}$ is the lower bound cost of all paths from $s$ to $t$
via $p$. If this lower bound cost is dominated by the cost of a path from $s$ to $t$, there does not
exist a \locally{nondominated} path from $s$ to $t$ via $p$. Additionally, there
cannot be any path $q$ from $s$ to $t$ where $cost(q)_i$ is smaller than the
cost of the shortest path w.r.t. to criterion $i$. Thus, the bound can always be
increased to the cost of the shortest path in dimension $i$.
\end{proof}

Now that we have introduced our pruning criteria, let us explain the label
correcting algorithms being used in the paper:
Any label correcting approach employs two important data structures. The first is 
a node table maintaining an entry for each visited node. The entry of
node $n$ maintains a local skyline, i.e. a list of nondominated paths starting
in $s$ and ending in $n$. For each path a flag is maintained, denoting
whether the path has already been processed or not. The second data structure
is a queue of node entries which is ordered with respect to a preference
criterion. In our implementation, we use the sum over the minimal cost values of the
local skyline in each cost criterion, but other preference function work as
well.

When answering a path skyline query for a given start node $s$ and a target
node $t$, the search starts by visiting the outgoing edges of $s$.
For each newly reached node, a node tab entry is generated and added to the
queue. In the main loop of the algorithm the top entry
is removed from the queue. Any path in its local skyline which has neither been processed
nor is \globally{dominated} by any result path found so far is extended. That means,
the path is extended by all outgoing links of node $n$. A new path is registered
in the local skyline of the respective target node if it is not \locally{dominated} by the
other paths in that entry. If a node tab entry is updated with a new path
it is added to the queue. The algorithm terminates when the queue is empty, i.e. no entry
contains any unprocessed path. This algorithm is similar to the
ARSC algorithm introduced in \cite{KriRenSch10}. However, the global domination
check in ARSC is less strong because it cannot exploit the existence of the
single criterion shortest paths. Furthermore, the lower bounds are less tight due to the
use of the precomputed Reference Node Embedding.

\begin{algorithm}[t]
\fontsize{8}{10}
\SetKwInOut{Input}{input}
\SetKwInOut{Output}{output}
\DontPrintSemicolon
\BlankLine
\tcp{step 1: initialisation}
\BlankLine
\ForEach{outgoing edge $(s,m)$ of $s$}{
	\BlankLine
	$\nodeskyline{m} \leftarrow \nodeskyline{m} \ \cup$ path consisting of the edge $(s,m)$ \;
	$\open \leftarrow open \ \cup \ \{m\}$ \;
}
\BlankLine
\tcp{step 2-4: main loop}
\BlankLine
\While{$\open{} \neq \{ \}$}{
	\tcp{step 2: node selection }
	$n \leftarrow {\scriptstyle \argmin_{o \in \open} \left( \min_{p \in \nodeskyline{o}} \sum_{i = 1}^{d} cost(p)_i+\lbcost{o}{t}_i \right)} $\;
	$\open \leftarrow open \setminus \{n\}$ \;
	\BlankLine
	
	\tcp{step 3: mark \globally{dominated} as processed }
	
	\BlankLine
	\ForEach{unprocessed $p \in \nodeskyline{n}$}{		
		\If{$\dom{\nodeskyline{t}}{\vecmax{\lbcost{s}{t}}{cost(p)+\lbcost{n}{t}}} $}{
			\BlankLine
			mark $p$ as processed\;
		} 
	}
	
	\BlankLine
	
	\tcp{step 4: extend unprocessed paths in $\nodeskyline{n}$}
	
	\BlankLine
	
	$\mathcal{A} \leftarrow$ unprocessed paths in $\nodeskyline{n}$ \;
	\BlankLine
	\ForEach{outgoing edge $(n,m)$ of $n$}{
		
		
		$\mathcal{B} \leftarrow$ paths in $\mathcal{A}$ extended with $(n,m)$ \;
		$\nodeskyline{m} \leftarrow $ merged skyline of $\nodeskyline{m}$ and $\mathcal{B}$ \;
		
		\BlankLine
		\If{ $\nodeskyline{m}$ was modified}{
		\BlankLine				
			$\open \leftarrow open \cup \{m\}$ \;
		}
	}
	
	mark all paths in $\nodeskyline{n}$ as processed \; 
}
\BlankLine
\tcp{step 5: termination}
\BlankLine
\Return{$\nodeskyline{t}$}
\vspace{0.5eM}
\caption{Pseudocode of ARSC, a label-correcting search for all nondominated paths from $s$ to $t$.} 
\label{alg:arsc}
\end{algorithm}

\subsection{Special Case: Two Cost Criteria}
\label{sec:twocriteria}

Although we focus on the general multicriteria case, we want to point out there
exist two major optimizations for the special case of two cost criteria. 
One pertains to the problem of skyline merging, the other significantly limits the
number of nodes that have to be visited. Both optimizations stem from the same 
quality of two-dimensional skylines: Sorting cost vectors w.r.t. one criterion
implies a sorting in reversed order w.r.t. the other criterion.

This can be used to more efficiently maintain local skylines in sorted lists
\cite{brumbaugh}. As a result, the only paths potentially dominating a path $p$ are

its two neighbors in the sorted skyline. Furthermore, the set of potentially dominated
skyline paths is also restricted to neighboring paths \cite{SheJosSchKri14}.
We employ this optimization for all compared algorithms in the experiments using only two
cost criteria.

The above also implies that the optimal path w.r.t. one criterion
(by definition part of the skyline) is the weakest skyline path w.r.t. to the other criterion.
Hence, if $s^1$ is the optimal path from $s$ to $t$ w.r.t. the first criterion, then $\pathcost{s^1}_2$
constitutes the upper bound for the second criterion, and vice versa. 

This property can be used to restrict the search space of lower bound
computation of the MD approach for the two criteria case and we will refer to
this approach as Doubledijkstra (DD) \cite{doubledijkstra}. The optimization
starts with a reverse dijkstra search for the shortest path from $t$ to $s$ w.r.t. the first
criterion. After this search reaches $s$, the search is done w.r.t. the second
criterion. Thus, we have upper bounds for both criteria. Finally, the search
for both criteria is continued until all nodes having one cost value
being smaller than the corresponding upper bound are found. Let us note that
this optimization is only applicable to the MD approach and it is still significantly
less efficient than our new method ParetoPrep due to lower quality of the lower bounds.

\section{Lower Bound Computation}
\label{section:contribution}

In this section, we introduce our novel approach ParetoPrep which computes all
required lower bounds for efficiently computing the path skyline
$\mathcal{S}(s,t)$. The idea of ParetoPrep is to compute all
single criterion shortest paths between $s$ and $t$ within a single partial graph
traversal. It will be shown that there cannot be a 
nondominated path between $s$ and $t$ containing any edge that is not visited
during this traversal. Thus, ParetoPrep visits all required nodes and
edges for processing a path skyline query. 
At the end of the section, we will introduce bidirectional
ParetoPrep which further limits the nodes being visited during the lower
bound computation.

\begin{figure}[t]
\centering
\includegraphics[height=5cm]{./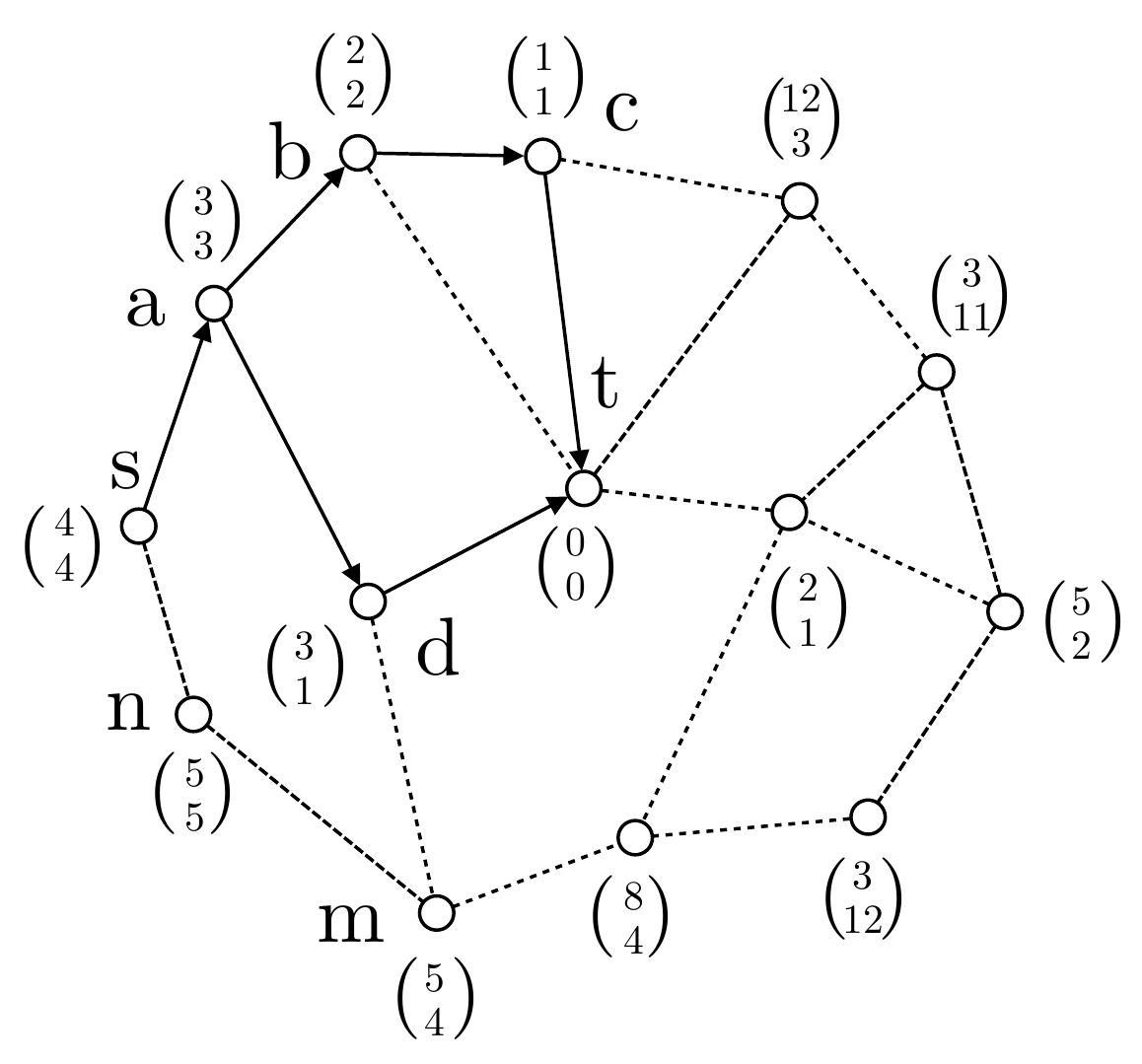}
\caption{Exemplary output of ParetoPrep given a start node $s$ and a target node $t$. The indicated
paths $\{s,a,b,c,t\}$ and $\{s,a,d,t\}$ are the shortest paths for the first and second criterion.
The vectors next to each node are the computed lower bound costs $\lbname$ of reaching $t$ from
the respective nodes.}
\end{figure}


\begin{figure}[t]
\centering
\begin{tabular}{ccc}
\includegraphics[width=4cm]{./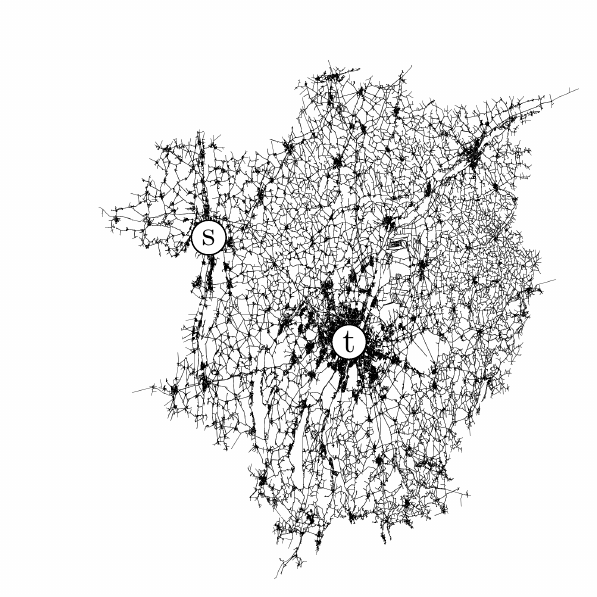} &
\includegraphics[width=4cm]{./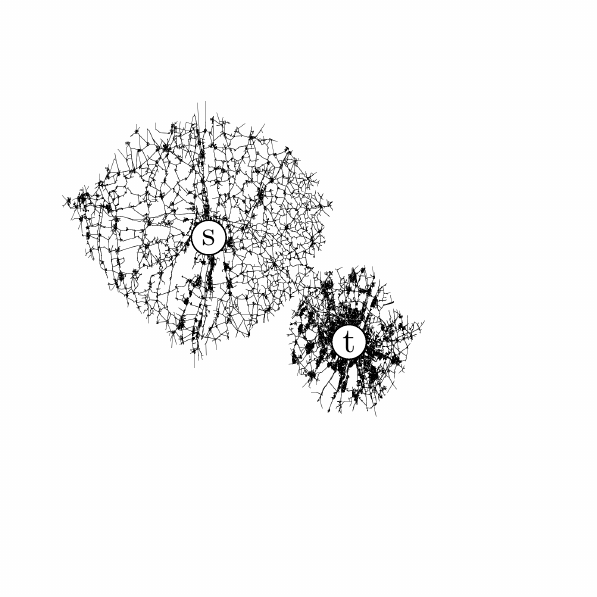} \\
Doubledijkstra & Bidirectional ParetoPrep \\
\end{tabular}
\caption{Comparison of visited nodes by Doubledijkstra and Bidirectional ParetoPrep
for a routing task from Augsburg to Munich with \te{} as cost criteria.}\label{fig:ddshortcomings}
\end{figure}

\begin{figure}[t]
\centering
\begin{tabular}{c}
\includegraphics[width=5cm]{./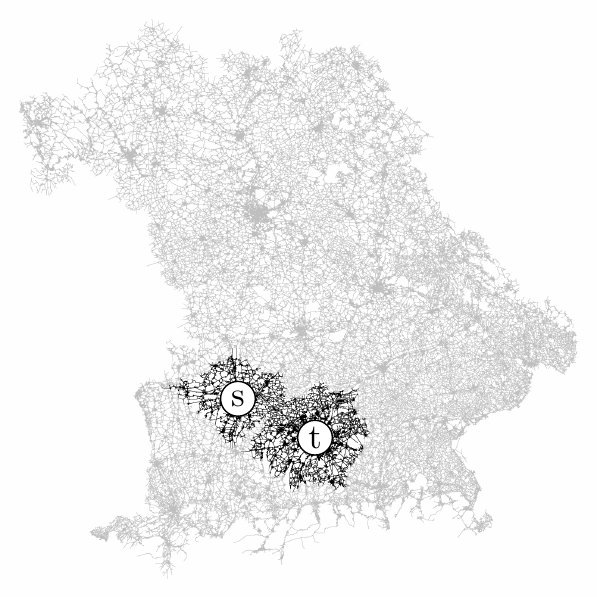}
\end{tabular}
\caption{Comparison of visited nodes by Bidirectional ParetoPrep (black subgraph) and Multidijkstra
(complete graph in gray) for a routing task from Augsburg (s) to Munich (t) with \tdecl{}.}
\label{fig:mdshortcomings}
\end{figure}

\begin{figure}[t]\label{fig:arscsearcharea}
\centering
\begin{tabular}{cc}

\includegraphics[width=4cm]{./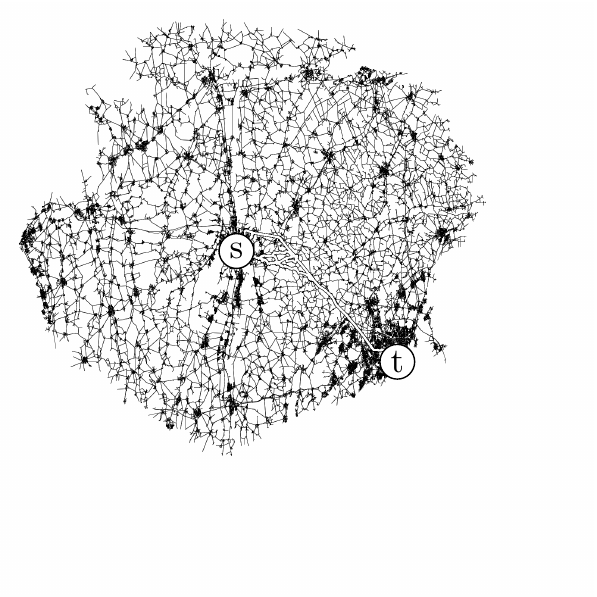} &
\includegraphics[width=4cm]{./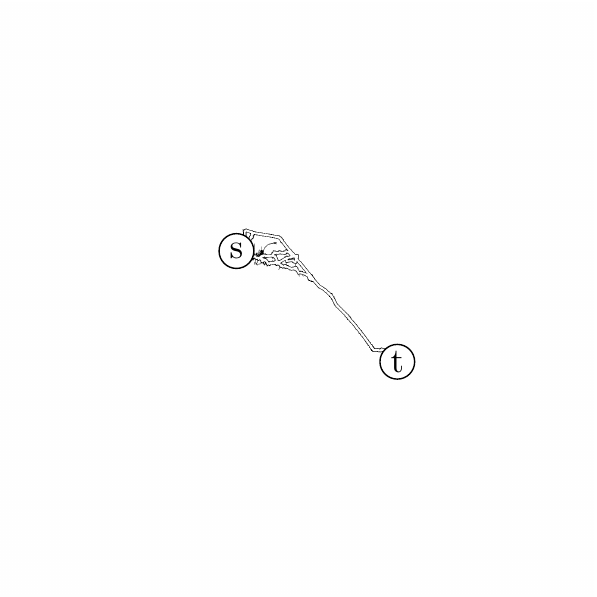} \\
without ParetoPrep & with ParetoPrep
\end{tabular}
\caption{The comparison of label correcting search areas for a routing task
from Augsburg (s) to Munich (t) with two cost criteria shows that the label correcting
search considers almost no dominated paths when using the information provided by ParetoPrep.} \label{fig:prepgraph}
\end{figure}

\subsection{ParetoPrep}

The goal of a precomputation step such as ParetoPrep or MD is to
compute a lower bound for each cost criterion and each node that is potentially visited
when computing the path skyline with a label correcting algorithm.
The cost of the shortest path w.r.t. each cost criterion represents an optimal bound. 
The shortest paths are always part of the route skyline and any additional path
being part of the skyline cannot have any cost value being smaller than the shortest path
w.r.t. the corresponding cost criterion. The MD approach performs separate
searches for each cost criterion and visits the entire network for each criterion.

The idea of ParetoPrep is to avoid traversing the same subpaths multiple times
by computing the bounds in a single traversal of the relevant portion of the graph. Furthermore, ParetoPrep
limits the search space due to the usage of a global domination check.

In this subsection, we will describe ParetoPrep and the information it computes.
In the next subsection, we will show why this resulting information allows a
fast computation of the path skyline. 
We will start our discussion by giving an overview of the used data structures.
ParetoPrep maintains a set of open nodes $open$ and a set $\paths$ of paths from
$s$ to $t$. Each visited node $n$ has an entry consisting
of two vectors $\{ \lb{n},\ succ(n)\}$.

The cost vectors $\lb{n} \in \vectors{d}$\ are the lower bound
costs of all paths from $n$ to $t$, through which $n$ was previously reached in
ParetoPrep. Upon termination of the algorithm $n$ has been reached by all
\globally{nondominated} paths from $n$ to $t$. The edges $\succ{n} \in \edges^d$
are the first edges of the currently shortest path from $n$ to $t$ w.r.t.
criterion $i$. These successor edges are used to reconstruct the shortest
paths from $s$ to $t$ w.r.t. to each of the cost criteria.

An entry of an unvisited node $n$ is assumed to be $\lb{n}_i = \infty$,
$succ_i(n) = \emptyset$.
$\lb{t}_i$ is the zero vector because the lower bound costs of reaching $t$ from $t$
are zero.

The pseudocode of the algorithm is provided in Algorithm~\ref{alg:main}. Let $s$
be the start node of the routing task, $t$ the target node and $d$ the number
of cost criteria.

\begin{algorithm}[t]


	

\SetKwInOut{Input}{input}
\SetKwInOut{Output}{output}


\DontPrintSemicolon
\BlankLine
\tcp{step 1: initialisation}
\BlankLine
$\paths \leftarrow \{\}$\;
$\open{} \leftarrow \{ t \}$ \;
\BlankLine
\tcp{step 2-5: main loop}
\BlankLine
\While{$\open{} \neq \{ \}$}{

	\BlankLine
	\tcp{step 2: node selection}
	\BlankLine
	$n \leftarrow \argmin_{q \in \open} \sum_{i = 1}^{d} \lb{q}_i$ \; 
	$\open \leftarrow open \ \setminus \ \{n\}$ \;

	\BlankLine
	\tcp{step 3: global domination}
	\BlankLine
	\If{$\dom{\paths}{ \lb{n} }$}{
		\BlankLine
		skip step 4 and 5 \;
	}
	
	\BlankLine
	\tcp{step 4: node expansion}
	\BlankLine 
	\ForEach{incoming edge $(m,n)$ of $n$}{
	
		\BlankLine
	
		\ForEach{criterion $i$}{
			\BlankLine
			\If{$\lb{n}_i+\edgecost{m}{n}_i < \lb{m}_i $}{
				\BlankLine
				${\lb{m}}_i \leftarrow \lb{n}_i+\edgecost{m}{n}_i$ \;
				$\succi{m} \leftarrow (m,n)$ \;
				$\open{} \leftarrow \open{} \ \cup \ \{m\}$ \;
			}
		}
	}

	\BlankLine
	\tcp{step 5: path construction}
	\BlankLine
	
	\If{$\lb{s}$ was modified in step 4}{
	\ForEach{modified component $i$ of $\lb{s}$}{
		\BlankLine
		
		$p \leftarrow \text{reconstructpath}(s,t,\succiname{},i)$ \;
		$\paths \leftarrow \paths \ \cup \ \{p\}$ \;
		remove paths dominated by $p$ from $\paths$ \; 
	}
	}
}
\BlankLine
RETURN $\paths$ AND $lb$
\BlankLine

\caption{Pseudocode of ParetoPrep}
\label{alg:main}
\end{algorithm}


The first step of the algorithm is the initialization. The open set is created and the target
node $t$ is added to the open set. 
The second step is node selection. In each iteration, an open node $n$ is
selected and removed from the open set. To reduce the number of nodes which have to be visited twice,
the nodes closest to $t$ should be visited first. To achieve this, each
node is ranked by the linear sum over the cost vector and the node with the
smallest value is selected first. 

The third step is a check whether the selected node $n$ has to be extended. In
other words, we compare the current lower bound $\lbcost{n}{t}$ to the current
set of shortest paths $\paths$ from $s$ to $t$. If the vector $\lb{n}$ is globally dominated,
the node does not need to be extendend.
This means, there cannot be any path from $s$ to $t$ via $n$ which is not
dominated by the already found shortest paths in $\paths$.

If $\lb{n}$ is not globally dominated, we perform step four
and five. The fourth step is the extension of the selected node $n$. In this
step, we consider the neighboring nodes having a directed edge ending at $n$,
i.e. the predecessors of $n$.
The cost of each predecessor node $m$ for each cost dimension $i$ is set to
the minumum of $\lb{m}_i$ and $\lb{n}_i + \edgecost{m}{n}_i$. For each criterion
$i$ in which $c(m)_i$ is changed, the $i$-th predecessor \arc $succ_i(m)$ is set
to $(m,n)$. If $\lb{m}$ was changed and $m$ is not the start node $s$, $m$ is
added to the set of open nodes. 

\begin{algorithm}[t]
\DontPrintSemicolon
\KwData{$s$, $t$, $\succiname{}$, $i$}
\KwResult{Current shortest path from $s$ to $t$ for criterion $i$}
$m \leftarrow s$\;
$p \leftarrow \text{new empty path} $\;
\While{$m \neq t$}{
	\BlankLine
	$(m,n) \leftarrow \succi{m}$\;	
	$p \leftarrow p \text{ extended with } (m,n)$\;
	$m \leftarrow n$\;		
}
\Return $p$\;
\caption{Pseudocode of ParetoPrep's path construction routine}
\label{alg:pathconstruct}
\end{algorithm}

The fifth step is the construction of paths from $s$ to $t$. Let us note that
$\paths$ contains the set of shortest paths w.r.t. to each cost criterion and thus,
$\paths$ is a subset of the route skyline being computed in the following label
correcting search.
This path construction step is only performed if at least one component of the vector $\lb{s}$
was modified in the previous step. For each modified cost criterion the currently shortest
path from $s$ to $t$ is constructed. These paths are constructed by following the respective successors
$\succiname{}$, similarly to how paths are reconstructed in Dijkstra's search.
The pseudocode is provided in Algorithm~\ref{alg:pathconstruct}. 

If after an iteration there are no more open nodes, the algorithm terminates.
Upon termination, $\paths$ contains a shortest
path from $s$ to $t$ for each cost criterion. And for every node $n$ which could 
be part of a \nondominated{} path from $s$ to $t$, $\lb{n}$ contains the lower bound
costs of reaching $t$.

\begin{figure}[t]\label{alg:execution}
\centering
\includegraphics[width=7cm]{./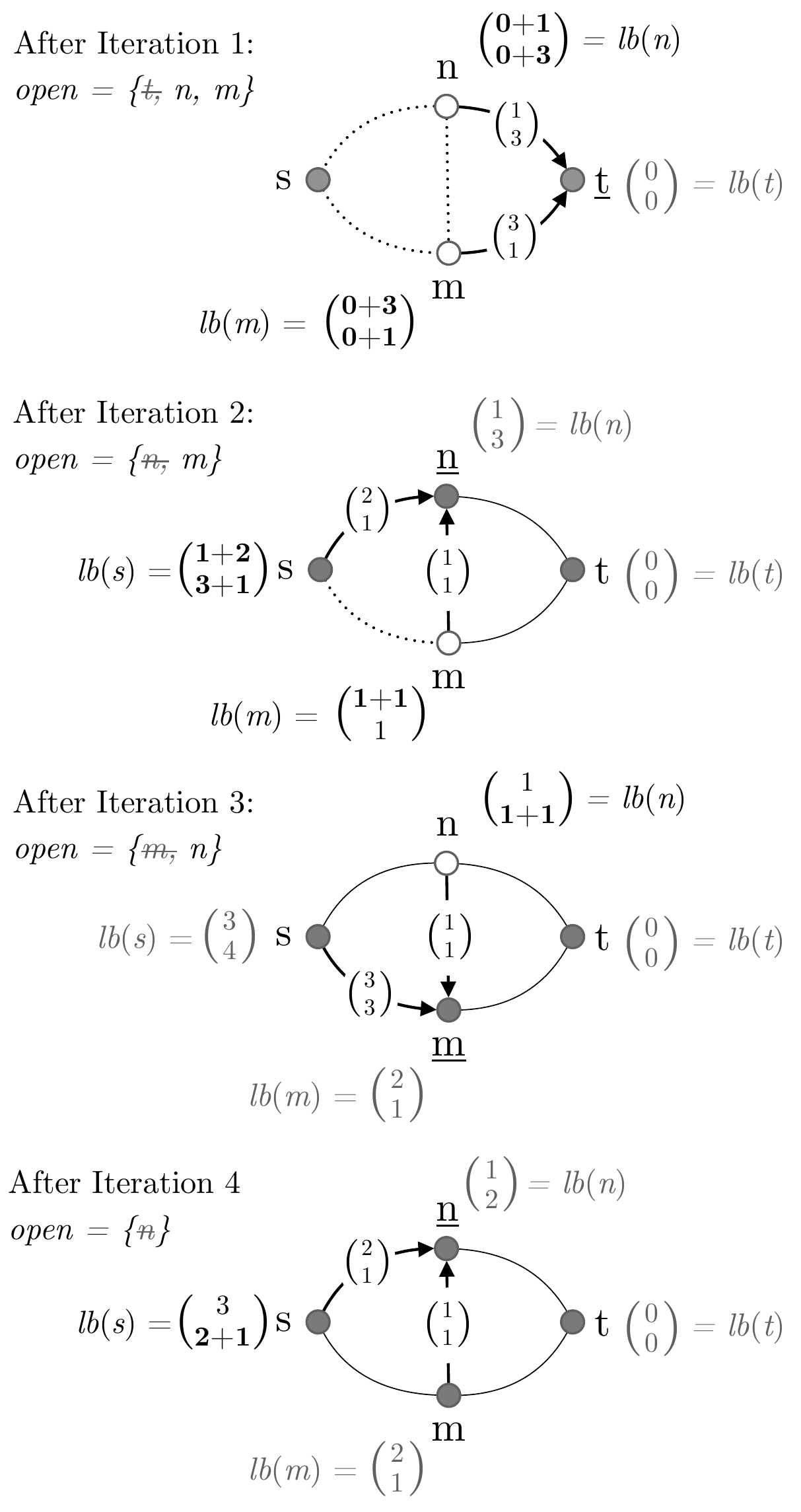}

\caption{Exemplary execution of ParetoPrep. The selected node of the iteration is underlined.
After iteration 2 the path through $[s,n_1,t]$ with the costs $[3,4]$ and after
iteration 3 the path through $[s, n_1, n_2, t]$ with the costs $[6,3]$ is constructed.
ParetoPrep terminates after iteration 4.}
\end{figure}

Figure~\ref{alg:execution} illustrates an exemplary execution of the algorithm
for a simple search task.

\subsection{Formal Aspects of ParetoPrep}
To show that ParetoPrep is a correct preprocessing step for computing path
skylines, we will show that ParetoPrep computes valid lower bounds for all nodes
that have to be visited during the search. Furthermore, we show that all single
criterion shortest paths between $s$ and $t$ are found. We will start by
introducing the relevant subgraph of $G(\vertices,\edges,\costs)$ which
is specific for a path skyline query between $s$ and $t$.
 
\begin{definition}[Solution Graph]
Let $\soledges(s,t) \subseteq \edges$
denote the set of all edges contained in some $p\in\paths(s,t)$.
Furthermore, let $\vertices_\text{SOL}:=\{v\in\vertices\mid v\in\soledges(s,t)\}$
and $\costs_\text{SOL} := \costs|_{\soledges(s,t)}$. The \emph{Solution Graph}
is defined as 
\\$\solgr(s,t) := (\vertices_\text{SOL}, \soledges(s,t), \costs_\text{SOL}).$
\end{definition}

There are two important properties of $\solgr(s,t)$ making it the central
concept for examining the correctness of ParetoPrep:
\begin{itemize}
\item It is not necessary to visit any edge $e \not \in \soledges(s,t)$ during
the search.
\item The cost vector of single criterion shortest paths in $\solgr(s,t)$
represent a viable lower bound for computing the path skyline.
\end{itemize}

The first property follows from the definition of $\solgr(s,t)$. The second
property is shown by the following lemma:

\begin{lemma}
Let $p=((s,n_1),..,(m,n),..,(n_l,t)) \in \mathcal{S}(s,t)$ be a nondominated
path, let $q=((s,n_1),..,(m,n))$ be a subpath of $p$ and let $\lb{n}_\text{SOL}$ be
the vector of the single criterion shortest paths between $n$ and $t$ for each
criterion in $\solgr(s,t)$. Then $\pathcost{q}+\lb{n}_\text{SOL}$ cannot be
dominated in $\mathcal{S}(s,t)$.
\begin{proof}
The single criterion shortest paths between $n$ and $t$ in $\solgr(s,t)$ lower
bounds any path between $n$ and $t$ in $\solgr(s,t)$. $p$ is nondominated
and for all $ 1\leq i \leq d$: $\pathcost{p}_i \leq \lb{n}_\text{SOL}+ \pathcost{q}_i$.
Thus, $\pathcost{q}+\lb{n}_\text{SOL}$ cannot be dominated in $\mathcal{S}(s,t)$.
\end{proof}
\end{lemma}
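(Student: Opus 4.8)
The plan is to use the part of $p$ after node $n$ as a witness that the solution-graph single-criterion distances $\lb{n}_\text{SOL}$ underestimate the true remaining cost of $p$, and then to argue that ``enlarging'' a dominated vector cannot destroy a domination.

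First I would split $p$ at $n$, writing $p = q\cdot r$ with $r = ((m,n),\ldots,(n_l,t))$ the portion of $p$ from $n$ to $t$. Since $p$ is a path it visits no node twice, so $r$ is a path as well; and since $p\in\mathcal{S}(s,t)$, every edge of $p$ — in particular every edge of $r$ — lies in $\soledges(s,t)$ by the definition of the solution graph. Hence $r$ is an admissible $n$-to-$t$ path inside $\solgr(s,t)$, which in passing shows $\solgr(s,t)$ connects $n$ to $t$ so that $\lb{n}_\text{SOL}$ has only finite entries. Because for each criterion $i$ the $i$-th entry of $\lb{n}_\text{SOL}$ is the minimal $i$-th cost over all $n$-to-$t$ paths in $\solgr(s,t)$, and $r$ is one such path, we get $(\lb{n}_\text{SOL})_i\le\pathcost{r}_i$ for every $i$. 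Adding $\pathcost{q}$ coordinate-wise yields
$$\pathcost{q}+\lb{n}_\text{SOL}\ \le\ \pathcost{q}+\pathcost{r}\ =\ \pathcost{p}$$
in every coordinate.

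The second ingredient is monotonicity of domination: if $a\prec_\text{dom}b$ and $b\le c$ coordinate-wise, then $a\prec_\text{dom}c$, since every coordinate of $a$ is still $\le$ the corresponding coordinate of $c$ and the strict inequality witnessing $a\prec_\text{dom}b$ survives as $a_i<b_i\le c_i$. Now I argue by contradiction: suppose some $r'\in\mathcal{S}(s,t)$ satisfied $\dom{\pathcost{r'}}{\pathcost{q}+\lb{n}_\text{SOL}}$. Combining this with the displayed inequality and the monotonicity just noted gives $\dom{\pathcost{r'}}{\pathcost{p}}$. But $r'$ and $p$ are both $s$-to-$t$ paths, so this says $p$ is locally dominated, contradicting $p\in\mathcal{S}(s,t)$. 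Hence no such $r'$ exists, i.e. $\pathcost{q}+\lb{n}_\text{SOL}$ is not dominated in $\mathcal{S}(s,t)$.

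I do not expect a real obstacle here: the only step needing a moment of care is verifying that $r$ genuinely is a legal path living inside $\solgr(s,t)$ — which follows from $p$ being simple and from $p$ being a skyline path, so all its edges belong to $\soledges(s,t)$. Everything else is bookkeeping around the two facts that $\lb{n}_\text{SOL}$ is a coordinate-wise lower bound for the tail of $p$ and that domination persists when the dominated vector is made larger. As a remark, this is precisely the property needed to use $\lb{n}_\text{SOL}$ in place of the generic lower bound $\lbcost{n}{t}$ in the Global Domination Check (Lemma~\ref{lemma:globaldominationcheck}) without ever pruning a subpath of a true skyline path, which establishes the second bullet claimed for $\solgr(s,t)$.
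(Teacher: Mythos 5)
Your proof is correct and follows essentially the same route as the paper's: split $p$ at $n$ into $q$ and a tail $r$, observe that the solution-graph shortest-path costs lower-bound $\pathcost{r}$ so that $\pathcost{q}+\lb{n}_\text{SOL}\le\pathcost{p}$ componentwise, and conclude from the nondominance of $p$. Your write-up is in fact more careful than the paper's, which states the key inequality with its direction reversed (evidently a typo, $\pathcost{p}_i \leq \lb{n}_\text{SOL}+\pathcost{q}_i$ instead of the converse) and leaves implicit both the check that $r$ lies inside $\solgr(s,t)$ and the monotonicity of domination under componentwise enlargement that you spell out.
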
 

We can now show two properties of ParetoPrep which prove the correctness
of our approach: 
\begin{itemize}
  \item ParetoPrep visits every edge in $\solgr(s,t)$.
  \item For every node $n \in \solgr(s,t): \lb{n}_i \leq \pathcost{p_i}_i$, where $p_i$ is the shortest path
  in $\solgr(s,t)$ w.r.t. cost dimension $i$.
\end{itemize}
The first property implies that every necessary node or edge is visited.
The second property implies that the lower bounds for the nodes relevant 
to the solution are correct. 
Note that for any node $n \not \in \solgr(s,t)$ $\lb{n}$ does not have
to be a correct lower bound of the cost of paths between $n$ and $t$.
An incorrect lower bound would imply larger values which could lead to
excluding $n$ from the search. However, since $n \not \in \solgr(s,t)$, it is not
required to follow paths through $n$. 
We will start by defining the graph ParetoPrep visits when traversing
$G(\vertices,\edges,\costs)$:

\begin{definition}[ParetoPrep Graph]
Let $\ppedges(s,t)\subseteq\edges$ denote the set of all edges which
are considered in Step 4 of Algorithm \ref{alg:main}. Analogously, $\vertices_\text{PP}$
denotes the set of vertices of $\ppedges(s,t)$ and $\costs_\text{PP}$ denotes the
costs restricted to $\ppedges(s,t)$. Finally, we refer to $\ppgr(s,t) := (\vertices_\text{PP}, \ppedges(s,t), \costs_\text{PP})$
as the \emph{ParetoPrep Graph}.
\end{definition}

Now, the first property is formulated and shown by the following lemma: 

\begin{lemma}
\label{lem:subset}
Given a path skyline query from $s$ to $t$ in a multicost network
$G(\vertices,\edges,\costs)$, it holds: $\solgr \subseteq \ppgr$ (for the pair $(s,t)$
which is omitted here for reasons of clarity).
\begin{proof}
It suffices to show that $\soledges \subseteq \ppedges$. Then the claim follows by
construction of the respective graphs. We show the proposed by contradiction:
Suppose, there exists $\soledges \ni (n_{i-1},n_i) \notin \ppedges$. Let
$p = ((s,n_1),\dots,(n_k,t)) \in \paths$ denote a skyline path which contains edge
$(n_{i-1},n_i)$. Since $(n_{i-1},n_i) \notin \ppedges$, the lower bound cost vector of some
${n_j, j\geq i}$, must have been globally dominated (cf. Step 3 in Algorithm \ref{alg:main}).
Since $\lb{n_j}_i \leq cost((n_j,n_{j+1}),\dots,(n_k,t))_i$ for all cost dimensions $i$,
the subpath from $n_j$ to $t$ is also globally dominated. But the negation of
Lemma \ref{lem:localdomination} states: If a subpath is dominated, so is the whole path.
Hence, $p$ must be dominated which contradicts the assumption.
\end{proof}
\end{lemma}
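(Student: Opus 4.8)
The plan is to follow the reduction already indicated in the statement: since $\solgr$ and $\ppgr$ are each determined by their edge sets (their vertex sets and cost labels are inherited from $\soledges$ and $\ppedges$ respectively), it suffices to prove $\soledges \subseteq \ppedges$. I would argue by contradiction and assume some edge $(n_{i-1},n_i)$ lies on a skyline path $p=((s,n_1),\dots,(n_k,t)) \in \mathcal{S}(s,t)$ yet is never considered in Step~4 of Algorithm~\ref{alg:main}. Because Step~4 of an expanded node scans \emph{all} of its incoming edges, the assumption $(n_{i-1},n_i)\notin\ppedges$ is equivalent to saying that $n_i$ is never expanded, i.e.\ every time it is selected it is pruned in Step~3 (or it is never selected at all).

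Next I would localise the failure to a single pruned node close to $t$. First observe that $t$ is expanded: it is the only node in the initial open set, and when it is selected $\paths$ is still empty, so Step~3 cannot prune it. I would then exploit the backward-propagation behaviour of Step~4: whenever a node $n_{j+1}$ on $p$ is expanded, relaxing its incoming edge $(n_j,n_{j+1})$ gives $n_j$ a finite label and inserts it into the open set, so $n_j$ is eventually selected. Setting $j:=\max\{\,j'\in\{i,\dots,k\}\mid n_{j'}\text{ is never expanded}\,\}$ (well defined, since $n_i$ qualifies), every downstream node $n_{j+1},\dots,n_k,t$ is therefore expanded, whereas $n_j$ itself is selected at least once and, being never expanded, is pruned at every selection.

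I would then convert the pruning of $n_j$ into a domination of $p$. If, at the moment $n_j$ is pruned, we have $\lb{n_j}_c \le \pathcost{(n_j,n_{j+1}),\dots,(n_k,t)}_c$ for every criterion $c$, the contradiction is immediate: the pruning condition $\dom{\paths}{\lb{n_j}}$ supplies an $s$-$t$ path $q\in\paths$ with $\pathcost{q}_c \le \lb{n_j}_c$ for all $c$ and strict in some coordinate, and combining this with $\pathcost{p}_c = \pathcost{(s,\dots,n_j)}_c + \pathcost{(n_j,\dots,t)}_c \ge \lb{n_j}_c$ (using non-negativity of the $s$-to-$n_j$ prefix) yields $\dom{\pathcost{q}}{\pathcost{p}}$. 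This contradicts $p\in\mathcal{S}(s,t)$, and it is exactly the quantitative form of the contrapositive of Lemma~\ref{lem:localdomination}: a dominated subpath forces the whole path to be dominated.

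The main obstacle is justifying the inequality $\lb{n_j}_c \le \pathcost{(n_j,\dots,t)}_c$, because ParetoPrep is label-correcting and a node's label need \emph{not} be final the first time it is popped, so pruning on an over-estimated label could a priori be unsound. I would resolve this by considering the \emph{last} selection of $n_j$ (after which it is never re-inserted, so its label is final at that point) together with a downward induction along the subpath $t=n_{k+1},n_k,\dots,n_{j+1}$. Using that after a node's last expansion its label can no longer decrease, I would show that $\lb{n_{j'}}_c \le \pathcost{(n_{j'},\dots,t)}_c$ propagates edge by edge down to $n_j$, since each expanded $n_{j'+1}$ relaxes $(n_{j'},n_{j'+1})$ with its already-final label. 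The delicate points to verify are that re-insertion via Step~4 guarantees $n_j$ is reselected after this final relaxation has reached it, and that the dominating path $q$ is still available in $\paths$ at that selection, which holds because single-criterion shortest paths are never removed from $\paths$.
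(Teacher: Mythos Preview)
Your proposal is correct and follows essentially the same route as the paper's own proof: reduce to $\soledges\subseteq\ppedges$, argue by contradiction, locate a node $n_j$ on the skyline path that is pruned in Step~3, use $\lb{n_j}\le\pathcost{(n_j,\dots,t)}$ to turn the pruning into a domination of $p$, and invoke the contrapositive of Lemma~\ref{lem:localdomination}.

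Where you differ is in rigour, not strategy. The paper simply asserts the key inequality $\lb{n_j}_i \le \pathcost{(n_j,\dots,t)}_i$ and does not discuss which $n_j$ is meant or why the bound holds in a label-\emph{correcting} setting. You explicitly choose $n_j$ as the downstream-most never-expanded node, observe that this forces $n_j$ to be selected (via the expansion of $n_{j+1}$), and flag the genuine subtlety that labels need not be final when first popped. Your proposed resolution via the last selection of $n_j$ and a downward induction along $n_{k},\dots,n_{j+1}$ is the right idea; the one point that deserves a sentence more is that if some intermediate $n_{j'+1}$ is itself pruned at its last selection after its label has become $\le\pathcost{(n_{j'+1},\dots,t)}$, the contradiction already fires at $n_{j'+1}$ rather than at $n_j$, so either way the induction closes. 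Apart from that, your argument strictly refines the paper's.
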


The second property we have to show is that the lower bounds for any node $n
\in \solgr(s,t)$ are correct:

\begin{lemma}
Given a path skyline query from $s$ to $t$ and its solution graph
$\solgr(s,t)$, let $p_i(n,t)$ denote the shortest paths between node $n
\in \solgr(s,t)$ and $t$ for cost criterion $i$ and let $\lb{n}_i$ be the
lower bound value being computed by ParetoPrep for $n$ and $i$. Then,
$\lb{n}_i \leq \pathcost{p_i(n,t)}$.
\begin{proof}
Case (i): $p_i(n,t) = ((n,n_1),..(n_l,t))$ is the shortest path w.r.t. criterion
$i$ in both graphs $\solgr(s,t,)$ and $\ppgr(s,t)$.
We now show that ParetoPrep traverses $p_i(n,t)$ and thus, $\lb{n}_i = \pathcost{p_i(n,t)}_i$.
This is proven by induction.
\\Basis: $(n_l,t)$ is traversed by ParetoPareto because any incoming node of $t$
is examined. Inductive step: Given that $((n_k,n_{k+1}),..,(n_l,t))$ is
traversed, then $\lb{n_k}_i$ is updated and $n_k$ is inserted into the open set.
Because ParetoPrep does not terminate until the open set is empty, $n_k$ will be
processed at some iteration. Now, if $\lb{n_k}$ is currently
globally nondominated in step 3, step 4 is performed and $(n_k-1,n_k)$ is
examined which means the claim is proven.
If $\lb{n_k}$ is currently globally dominated in step 3,  we have to show that
there has to be a later point in time where $n_k$ is globally nondominated.
Since $n_k \in \solgr(s,t,)$ and $\solgr(s,t) \subseteq \ppgr(s,t)$, we know
that $n_k$ is guaranteed to be nondominated at the end of ParetoPrep. Thus,
there must exist an iteration where $n_k$ passes step 3 for the first time and 
step 4 is performed traversing the final edge $(n_k-1,n_k)$.
Case (ii): $p_i(n,t)$ is the shortest path w.r.t. criterion $i$ in $\solgr{s,t}$
but not w.r.t. $\ppgr(s,t)$. In this case $\lb{n}_i < \pathcost{p_i(n,t)}_i$
holds because $\solgr(s,t,) \subseteq \ppgr(s,t)$ and the cost of any shortest path cannot
be increased by extending the graph.
\end{proof}
\end{lemma}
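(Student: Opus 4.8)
The plan is to establish the inequality $\lb{n}_i \leq \pathcost{p_i(n,t)}_i$ for every node $n \in \solgr(s,t)$ by a case distinction on whether the shortest path $p_i(n,t)$ in $\solgr(s,t)$ is also a shortest path in the larger graph $\ppgr(s,t)$. Since Lemma~\ref{lem:subset} already gives us $\solgr \subseteq \ppgr$, the two graphs differ only by possibly extra edges in $\ppgr$, and adding edges can only decrease (never increase) single-criterion shortest path costs. So the second case, where $p_i(n,t)$ is \emph{not} shortest in $\ppgr$, is immediate: there is a strictly cheaper $n$-to-$t$ path in $\ppgr$ w.r.t.\ criterion $i$, and once I show ParetoPrep computes a value no larger than the $\ppgr$-optimum, I get $\lb{n}_i < \pathcost{p_i(n,t)}_i$ for free. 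Hence the real work is the first case, where $p_i(n,t)$ is a shortest path w.r.t.\ $i$ in both graphs; there I must show ParetoPrep actually traverses every edge of $p_i(n,t)$ and relaxes it, so that $\lb{n}_i$ equals $\pathcost{p_i(n,t)}_i$.

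For Case (i) I would argue by induction along the path $p_i(n,t) = ((n,n_1),(n_1,n_2),\dots,(n_l,t))$ from the target end backwards. The base case: the last edge $(n_l,t)$ is relaxed because $t$ is the first node placed in $\open$ (Step 1), $\lb{t}$ is the zero vector, so when $t$ is selected every incoming edge $(n_l,t)$ is considered in Step 4 and the relaxation condition $\lb{t}_i + \edgecost{n_l}{t}_i < \lb{n_l}_i = \infty$ holds. For the inductive step, suppose the suffix $((n_k,n_{k+1}),\dots,(n_l,t))$ has been traversed and $\lb{n_k}_i$ has been set to $\pathcost{((n_k,n_{k+1}),\dots,(n_l,t))}_i$, and that $n_k$ has been inserted into $\open$. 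Since the algorithm runs until $\open$ is empty, $n_k$ is eventually selected in Step 2. If at that moment $\lb{n_k}$ is not globally dominated, Step 4 runs and the edge $(n_{k-1},n_k)$ is considered, completing the induction. The subtlety is the case where $\lb{n_k}$ is globally dominated when $n_k$ is first selected: then Steps 4 and 5 are skipped and the edge $(n_{k-1},n_k)$ is not immediately relaxed. Here I would invoke Lemma~\ref{lem:subset}: since $n_k \in \solgr(s,t) \subseteq \ppgr(s,t)$, the node $n_k$ must lie in the ParetoPrep graph, which by definition of $\ppgr$ means its final $\lb{n_k}$ is, at some point, globally nondominated and Step 4 is executed for it — so there is a later iteration where $n_k$ passes Step 3 and the edge $(n_{k-1},n_k)$ gets traversed. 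One should double-check that the relaxation condition still fires at that later time: since $\lb{n_k}_i$ never increases and along an optimal subpath the value propagated from $n_k$ is exactly the shortest-path cost, the strict-decrease test $\lb{n_k}_i + \edgecost{n_{k-1}}{n_k}_i < \lb{n_{k-1}}_i$ holds unless $\lb{n_{k-1}}_i$ already equals (or is below) that optimal value — and in either event $\lb{n_{k-1}}_i \leq \pathcost{p_i(n_{k-1},t)}_i$, which is all we need.

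The main obstacle I anticipate is making the ``$n_k$ eventually becomes globally nondominated'' argument airtight without circularity. The claim that every node of $\solgr(s,t)$ is ultimately processed with a nondominated label is essentially the content of Lemma~\ref{lem:subset}, so I must be careful to use that lemma as a black box rather than re-deriving it. There is also a mild monotonicity fact I should state explicitly and use throughout: the values $\lb{m}_i$ are non-increasing over the execution of ParetoPrep (Step 4 only ever lowers them), which guarantees that once a correct upper bound on the shortest-path cost has been established at a node it is never lost, and that relaxations performed at a ``late'' visit of $n_k$ still propagate correctly to $n_{k-1}$. With that monotonicity in hand, the induction goes through cleanly and both bullet-point properties of ParetoPrep — every edge of $\solgr$ visited (Lemma~\ref{lem:subset}) and correct lower bounds on $\solgr$ (this lemma) — together yield the correctness of ParetoPrep as a preprocessing step.
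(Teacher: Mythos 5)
Your proposal is correct and follows essentially the same route as the paper's proof: the identical case split on whether $p_i(n,t)$ is also shortest in $\ppgr(s,t)$, the same backward induction along the path from $t$ with the base case at the incoming edges of $t$, and the same appeal to Lemma~\ref{lem:subset} to resolve the subtlety of a temporarily globally dominated $n_k$. Your explicit statement of the monotonicity of the $\lb{\cdot}_i$ values and the check that the relaxation test either fires or is already satisfied with a value at most the optimum are refinements the paper leaves implicit, but they do not change the argument.
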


After proving that ParetoPrep visits all relevant nodes $n \in \solgr(s,t)$ and
computes valid lower bounds $\lb{n}$, we show that all single criterion shortest
paths between $s$ and $t$ for the complete graph are computed.
\begin{corollary}
Given a multicost network $G(\vertices,\edges,\costs)$ and a 
path skyline query from $s$ to $t$, it holds: For all cost dimensions $i$
$\lb{s}_i$ is the cost value of the shortest path from $s$ to $t$
(w.r.t. dimension $i$) in the original graph $G(\vertices,\edges,\costs)$.
\begin{proof}
The single criterion shortest paths between $s$ and $t$ are part of
$\mathcal{S}(s,t)$ and also shortest paths in the complete graph
$G(\vertices,\edges,\costs)$. Thus, case (i) of the previous lemma can be applied.
\end{proof}
\end{corollary}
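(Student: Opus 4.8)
The plan is to reduce the statement to Case (i) of the immediately preceding lemma. Applied with $n := s$, that lemma guarantees $\lb{s}_i = \pathcost{p_i(s,t)}_i$ whenever the $i$-shortest $s$-$t$ path $p_i(s,t)$ in $\solgr(s,t)$ is \emph{also} an $i$-shortest $s$-$t$ path in $\ppgr(s,t)$. So it suffices to exhibit an $s$-$t$ path that (a) attains the minimum cost in dimension $i$ over all of $G$ and (b) belongs to $\solgr(s,t)$: by $\solgr(s,t) \subseteq \ppgr(s,t) \subseteq G$ (the first inclusion being Lemma \ref{lem:subset}) such a path is then automatically $i$-shortest in $\solgr(s,t)$, in $\ppgr(s,t)$ and in $G$ simultaneously, since an optimum realized inside a subgraph stays optimal within that subgraph — putting us squarely in Case (i).

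First I would show that some $i$-optimal $s$-$t$ path of $G$ is a skyline path. Take any $s$-$t$ path $q$ with $\pathcost{q}_i$ minimal over all $s$-$t$ paths in $G$ (a path exists, else the claim is vacuous). If $\dom{\pathcost{p'}}{\pathcost{q}}$ held for some path $p'$, then $\pathcost{p'}_i \le \pathcost{q}_i$ and, by minimality, $\pathcost{p'}_i = \pathcost{q}_i$, so $p'$ is again $i$-optimal. Hence among the (finitely many cost vectors of) $i$-optimal paths there is one, say $q^\star$, that is nondominated — e.g.\ one minimizing the remaining $d-1$ coordinates — i.e.\ $q^\star \in \mathcal{S}(s,t)$. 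By definition of the solution graph every edge of $q^\star$ lies in $\soledges(s,t)$, so $q^\star$ is a path in $\solgr(s,t)$; this gives (b), while (a) holds by construction.

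Putting the pieces together, $q^\star$ plays the role of $p_i(s,t)$ in Case (i) of the previous lemma, whose proof is exactly the induction showing that ParetoPrep relaxes the edges of this path one after another from $t$ back to $s$. Therefore $\lb{s}_i = \pathcost{q^\star}_i$, which by the choice of $q^\star$ is the minimum of $\pathcost{p}_i$ over all $s$-$t$ paths $p$ in $G$, i.e.\ the cost of the shortest $s$-$t$ path with respect to dimension $i$ in the original graph $G$, as claimed.

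I do not expect a substantial obstacle here: the two hard facts — that $\solgr(s,t) \subseteq \ppgr(s,t)$, and that ParetoPrep actually reaches $s$ along each single-criterion shortest path — are already supplied by Lemma \ref{lem:subset} and the preceding lemma. The only point needing care is the tie-breaking step: one must not merely assert that ``the $i$-shortest path is nondominated'', but verify that \emph{some} $i$-optimal path lies in $\mathcal{S}(s,t)$ and that it remains $i$-optimal when passing from $\solgr$ to $\ppgr$ to $G$ — this is precisely what lands us in Case (i) rather than the strict-inequality Case (ii) of the lemma.
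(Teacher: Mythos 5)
Your proposal is correct and follows essentially the same route as the paper: both reduce the claim to Case (i) of the preceding lemma by observing that a single-criterion shortest $s$-$t$ path belongs to $\mathcal{S}(s,t)$ (hence to $\solgr(s,t)$) and remains $i$-optimal in the larger graphs. Your only addition is the careful tie-breaking argument showing that \emph{some} $i$-optimal path is nondominated, a detail the paper's one-line proof asserts without elaboration.
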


Hence, ParetoPrep computes all single criterion shortest paths between $s$
and $t$ (w.r.t. $G(\vertices,\edges,\costs)$). Also, ParetoPrep visits the
subgraph that is relevant for answering a path skyline query and therein
computes valid and tight lower bounds during one single traversal of $\ppgr$.

\subsection{Bidirectional ParetoPrep}
Now that we have introduced the basic version of our algorithm and proven
its theoretic foundation, we will introduce an
optimized version of ParetoPrep. The idea of this optimization is the fact 
that the global domination check limiting the search
space can still be improved. This is achieved by employing bidirectional search.
In ParetoPrep we do not extend a node $n$ if $\lb{n}$ is dominated by some path in $\paths$.
Though this check is correct, it is not tight in most cases because the cost
of the path connecting $s$ and $n$ is disregarded. Thus, if we had a lower bound approximation
for the costs of getting from $s$ to $n$ $\lbcost{s}{n}$, the check for global
domination in Step 3 of Algorithm \ref{alg:main} could be optimized by checking if
$\dom{\paths}{\lb{n}+\lbcost{s}{n}}$.

The idea of bidirectional ParetoPrep is to start two searches simultaneously.
A backward search starts at $t$ and proceeds in the same way as described
above to derive lower bounds between the visited nodes and the target $t$.
Additionally, a forward search starts at $s$ and traverses the graph
in the forward direction in order to find lower bounds for the path between $s$
and intermediate nodes $n$.

\begin{figure}[t]
\centering
\includegraphics[height=4cm]{./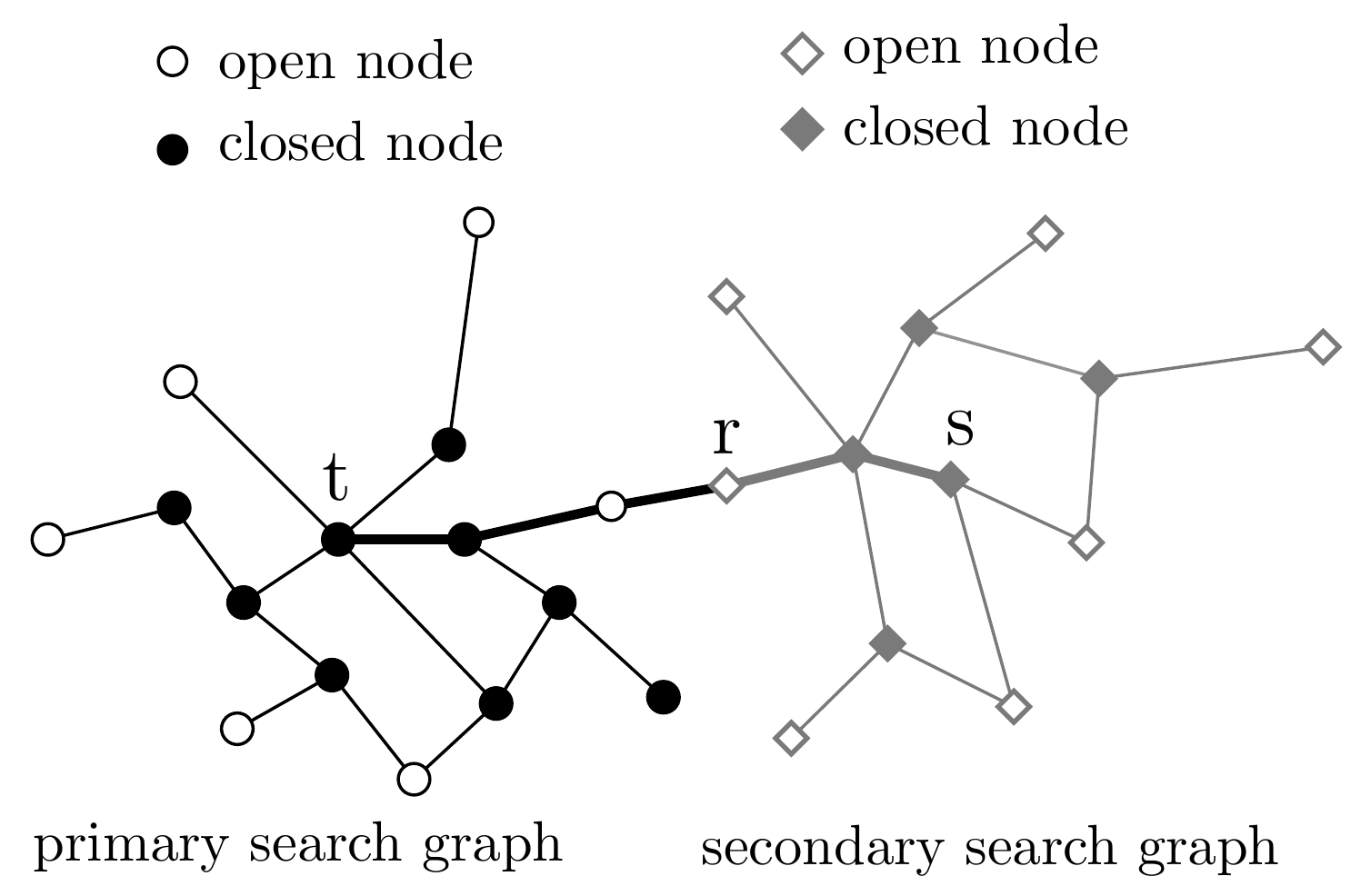}
\caption{Bidirectional ParetoPrep, when both its ParetoPrep searches rendezvous at node $r$.
To reach $s$ from outside the secondary search graph costs at least as much as the minimum of all
$\lbbetaname$ cost vectors of the secondary search.} \label{fig:bidirec}
\end{figure}

At the beginning both searches take turns, i.e. iterations are alternated.
Once both searches meet at some rendezvous node $r$, as shown in
Figure~\ref{fig:bidirec}, a path from $s$ to $t$ can be constructed for each cost dimension $i$ by merging the
current shortest paths from $s$ to $r$ and $r$ to $t$. After that, only the
backward search is continued.

In the following, we will refer to paths ending at nodes in the open set of the
forward search as open paths. If $t$ was not reached by a nondominated path $p$
from $s$ to $t$, there has to be an open path, which is contained in $p$. If
there is no such open path, ParetoPrep cannot reach $s$ through $p$.

Let $mincost^{forward}$ be the minimum cost of all open paths of the
forward search. The cost vector $mincost^{forward}$ consists of the
lower bound costs of all \globally{nondominated} paths from $s$ to any
node which has not been visited by the forward search. The minimum of all
cost vectors of open nodes yields the minimum cost of all open paths:

$${mincost}^\beta_i = \min_{ o \in \open_\beta} \lbbeta{o}_i$$  

Let $\lbcost{s}{n}$ denote the lower bound costs of all paths from $s$ to $n$ contained in
a nondominated path from $s$ to $t$. If no such costs are known, they have to be assumed as
zero. If $n$ was not reached by the forward search, they can be set to
${mincost}^\beta$:
$$\lbcost{s}{n}_i \ge \begin{cases} 0, 					& \text{if }n\text{ not visited by }\beta\text{ search}\\
	{mincost}^\beta_i,	& \text{if }n\text{ visited by }\beta\text{ search}
\end{cases}$$

The cost vector ${mincost}^{forward}$ is computed once when both searches rendezvous.
The forward search is discontinued after the rendezvous. The cost vectors
$\lbcost{s}{n}$ are then used in the backward search to determine if a node was
only reached by \globally{dominated} paths. If that is the case,
the node can be pruned.

In conclusion, this bidirectional ParetoPrep yields greater
pruning power and thus, it restricts the search space even further.
\section{Experiments}\label{section:experiments}

\vspace{1.5eM}
\textbf{Experimental Setup:} The experiments we present in this section aim to show three things. Firstly, any label correcting search (LCS) for all \nondominated{} paths
between designated start and target nodes in a network should utilize global domination. Secondly, our proposed
algorithms ParetoPrep (PP) and Bidirectional ParetoPrep (BPP) outperform the state of the art approaches Doubledijkstra (DD) and 
Multidijkstra (MD) in terms of runtime as well as memory usage. Thirdly and finally, every LCS can be significantly accelerated
in every way by employing the proposed preparatory search of ParetoPrep (PP) and even further accelerated by employing BPP. This
also allows solving more complex tasks (e.g. multicost path skyline computation) in less time.

The LCS we implemented as comparative method is ARSC \cite{arsc} (cf. Algorithm \ref{alg:arsc}) but any other LCS yields similar results. The
Reference Node Embedding, a preprocessing step which is presented in \cite{arsc}, is not used unless stated otherwise. 
Any lower bound costs are either set to zero or provided by a prior search like ParetoPrep. Note that any preprocessing step which is not
executed at query time (like the Reference Node Embedding) cannot be applied to dynamic graphs. As mentioned before (cf. Section \ref{sec:twocriteria}),
for the special case of two cost criteria an optimized variant of ARSC is used which maintains skylines in sorted lists and utilizes
the sort order when merging skylines \cite{brumbaugh,skriver2000}. 

The experiments were conducted on the road network of the German state of Bavaria, provided by OpenStreet Maps and consisting of 
1\,023\,561 nodes and 2\,418\,437 edges. There are two sets of routing tasks. The first set is comprised of ${27 \choose 2} \cdot 2 = 702$
routing tasks with Munich's central station, Munich's airport and Munich's 25 city districts as end points. The second set is comprised
of ${10 \choose 2} \cdot 2 = 90$ routing tasks between the ten biggest cities in Bavaria.

In the experiments three variables are measured: the number of visited nodes, the number of the assembled paths and the runtime
of the search task. The number of visited nodes by DD / MD / PP / BPP corresponds to the number of node entries and thereby reflects memory usage.
The number of assembled paths by the LCS is also strongly correlated with memory usage since all paths not dominated by
assembled paths have to be stored. The runtime is measured by letting all compared algorithms perform each task three times and taking the average. 
If a run takes longer than five minutes, it is aborted and counted as a time out.
All experiments are conducted on a machine with an Intel Xeon E5-2609 (10MB Cache, 2.40 GHz, 1066 MHz FSB) and 32 GB RAM, running SUSE
Linux 3.20.1 and Java OpenJDK IcedTea 1.7.0\_09. 

The cost criteria utilized in the experiments are travel duration (\criteria{\time}), route length (\criteria{\distance}), number of crossings
(\criteria{\crossings}), penalized travel duration (\criteria{\timelights}) and energy loss (\criteria{\energy}). Criterion
(\criteria{\time}) assumes travel speeds to equal the speed limits whereas the penalized estimate (\criteria{\timelights})
assumes additional 30 and 15 seconds for each crossing with and without traffic lights, respectively.

The energy loss estimate (\criteria{\energy}) is an synthetic model roughly derived from typical battery
capacities of electric cars and their respective ranges. It incorporates altitude differences in the following sense:
ascent increases the energy consumption by the gained potential energy and descent reduces the energy consumption
(while negative values are corrected to zero). Let us stress that the authenticity of the cost models used has absolutely no
influence on the computational benefits of our algorithms.

\begin{table}[h]
\begin{tabular}{llll}
\toprule
				& millions of 	&  & time\\
				& assembled & average & outs \\
algorithm		& paths & runtime & ($>$5min)\\
\midrule
\multicolumn{4}{c}{702 Munich tasks with \te{} (avg: 11.15 paths)} \\
\midrule
\LC 			&  0.29   		(53.3$\times$)	& 0.1s				(54$\times$) & 0 \\
\LCSS 			&  15.95   		(1.0$\times$)	& 5.44s				(1$\times$) & 0 \\
\bottomrule
\end{tabular}

\caption{Comparison of point-to-point \LC and single-source \LCSS}\label{tbl:globdom} 
\end{table}

\textbf{The Impact of Global Domination:} \Global{domination} pruning is indispensable for larger graphs and more
difficult tasks. In this experiment only a part of the graph of Bavaria (around
the city of Munich) with 221\,465 nodes and 519\,917 edges was used. This is 
because the single-source label correcting search (\LCSS{}) has to compute the
nondominated paths from $s$ to all other nodes in the graph which is not feasible for
the complete graph of Bavaria. Even though restricting the graph assists the \LCSS,
it can be observed that even in the smaller graph employing \gglobal{domination}
checks reduces the number of assembled paths by a factor greater than 53 and leads
to a reduction in runtime by the factor 54 (cf. Table~\ref{tbl:globdom}). It should be noted
that this is without utilizing lower bound costs which significantly improve
the pruning power of \gglobal{domination} checks even further.


\begin{table}[h]
\begin{tabular}{llll}
\toprule
			& millions of 					&  									& time\\
			& assembled 					& average 							& outs \\
algorithm	& paths 						& runtime 							& ($>$5min)\\
\midrule
\multicolumn{4}{c}{90 Bavaria tasks with \te{} (avg: 229.1 paths)} \\
\midrule
\textbf{\PPLC} & \textbf{5.2 	(15.2$\times$)}	& \textbf{5.8s	(6.9$\times$)}		& \textbf{0} \\
\REFLC{49} 	& 20.7  		(3.8$\times$)	& 14.4s			(2.8$\times$) 		& 0 \\
\REFLC{16} 	& 22.4   		(3.5$\times$)	& 13.5s			(2.9$\times$) 		& 0 \\
\REFLC{4} 	& 26.9   		(2.9$\times$)	& 14.9s			(2.7$\times$) 		& 0 \\
\REFLC{1} 	& $>$48.1   	(1.6$\times$)	& $>$30.3s		(1.3$\times$) 		& 2 \\
\LC 		& $>$79.6  		(1.0$\times$)	& $>$40.5s		(1.0$\times$) 		& 3 \\
\midrule
\multicolumn{4}{c}{702 Munich tasks with \tel{} (avg: 125.9 paths)} \\
\midrule
\textbf{\PPLC}	& \textbf{0.05 	(38.2$\times$)}	& \textbf{1.4s	(48.8$\times$)} 	& \textbf{0} \\
\MDLC 		& 0.05 			(38.2$\times$)	& 4.0s			(17.3$\times$)		& 0 \\
\REFLC{49} 	& $>$0.44  		(4.3$\times$)	& $>$14.4s		(4.7$\times$) 		& 10.3 \\
\REFLC{16} 	& $>$0.46   	(4.1$\times$)	& $>$16.3s		(4.3$\times$) 		& 15 \\
\LC 		& $>$1.93  		(1.0$\times$)	& $>$70.4s		(1.0$\times$) 		& 102 \\
\bottomrule
\end{tabular}

\caption{Comparison of runtime and number of assembled paths of LCS on its own and preceded by PP / MD or in
combination with a reference point embedding with $k$ evenly distributed landmarks (\REFLC{k}).}  \label{tbl:embcomp}
\end{table}

\vspace{1.5eM}
\textbf{Label Correcting Search and PP:} Utilizing lower bounds in \gglobal{domination checks} significantly reduces the number of paths
that have to be assembled. This improves runtime
and memory usage of the search. The runtime of ParetoPrep in these experiments is 
only 0.8s on average per query. This is only a small portion of the overall query time
but the reduction of the label correcting runtime is about tenfold.

Preceding a LCS with PP and MD is crucial to solve more difficult routing tasks in a manageable
time frame. As can be seen in Table~\ref{tbl:embcomp} for 702 Munich tasks with \tel, a \LC
on its own can solve less than one seventh of the Munich tasks in less than five minutes 
whereas with PP or MD all tasks are solved in less than two and a half
minutes. On average \LC is at least 17 times slower than \MDLC{} and at least 49 times slower than \PPLC{}. 

\begin{figure}[t]
\center{Munich routing tasks with \tel}
\centering
\includegraphics[scale=0.55]{./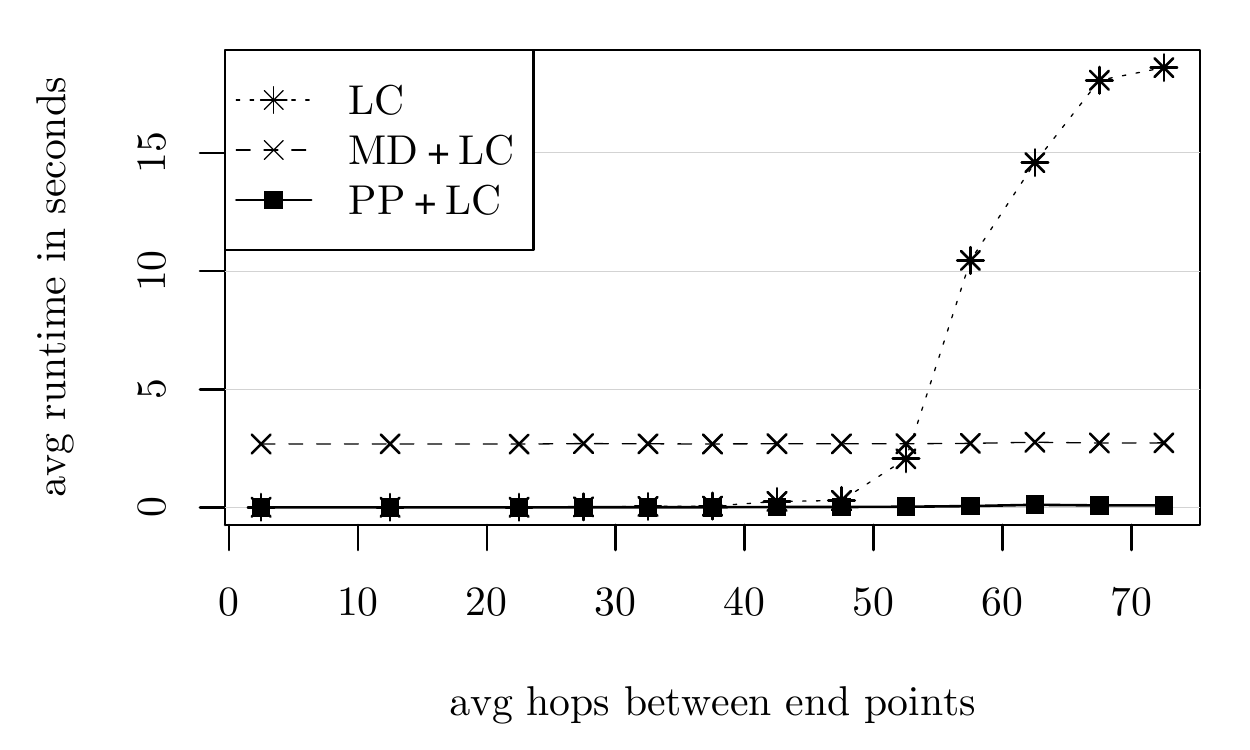}

\caption{Runtime comparison of LCS on its own and preceded by MD and PP for 290 local tasks with less than 70 hops between endpoints.}\label{fig:mdlocal}
\end{figure}

PP is clearly preferable to MD. As shown in Figure~\ref{fig:mdlocal}, \MDLC{} is often slower than
LCS on its own. \PPLC{}, on the other hand, is always faster -- on average 2.7 times faster than \MDLC{}.

Table~\ref{tbl:embcomp} also shows that a LCS in combination with PP is about twice as fast as with a
Reference Node Embedding. This result is very impressive since PP, unlike the
Reference Node Embedding, does not require any precomputation and works instantly for any given graph.
It is faster because the information provided by PP is superior to that provided by a
Reference Node Embedding and thereby reduces the number of assembled paths during the LCS.
PP also does not compute the minimum of lower bound vectors provided by
different reference nodes, which is why 49 reference nodes lead 
to worse runtime results than 16 for two cost criteria although the overall number of assembled paths is smaller.

\begin{table}[h]
\begin{tabular}{lllll}
\toprule
									& \multicolumn{4}{c}{\% of graph's nodes visited} \\
\cmidrule(r){2-5}
cost criteria							& {\scriptsize LCS} 		& 	{\scriptsize DD/MD+LC} 	& {\scriptsize PP+LC} 	& {\scriptsize BPP+LC} \\	
\midrule
\multicolumn{5}{c}{90 Bavaria tasks} \\
\midrule	
\td{} 	& 37.7 \% 	&	58 \%	& 44.9 \%	& \textbf{30.8 \%} \\
\te{} 	& 39.2 \% 	&	39.2 \%	& 44.7 \%	& \textbf{32.8 \%} \\
\midrule
\multicolumn{5}{c}{702 Munich tasks} \\
\midrule	
 \tel{} 	& 2.57 \% 	& 100 \%	& 3.87 \%	& \textbf{2.1 \%} \\
 \bottomrule
\end{tabular}
\caption{Comparison of visited nodes by an LCS on its own and when preceded by DD / MD, PP or BPP.} \label{tbl:nodesppvslc}
\end{table}

Preceding a LCS by BPP reduces the number of visited nodes. As shown in Table~\ref{tbl:nodesppvslc},
PP and DD visit more nodes than an LCS whereas BPP visits less. 

\begin{table}[h]
\centering
\begin{tabular}{llll} 
\toprule 
& \multicolumn{3}{c}{avg \% of nodes visited ($\times$ less than DD)}\\ 
\cmidrule(r){2-4} 
& DD & PP & BPP \\ 
\midrule 
 \multicolumn{4}{c}{\munichtasks} \\ 
 \midrule 
\td & $3.6\ (1.0\times)$ & $2.7\ (1.3\times)$ & $1.6\ (2.3\times)$\\ 
\te & $3.6\ (1.0\times)$ & $2.8\ (1.3\times)$ & $1.8\ (2.0\times)$\\ 
\tl & $5.4\ (1.0\times)$ & $3.6\ (1.5\times)$ & $1.7\ (3.1\times)$\\ 
\tc & $11\ (1.0\times)$ & $4.5\ (2.5\times)$ & $2.0\ (5.6\times)$\\ 
\midrule 
 \multicolumn{4}{c}{\bavariatasks} \\ 
 \midrule 
\td & $58\ (1.0\times)$ & $44\ (1.3\times)$ & $30\ (1.9\times)$\\ 
\te & $56\ (1.0\times)$ & $44\ (1.3\times)$ & $32\ (1.7\times)$\\ 
\tl & $45\ (1.0\times)$ & $41\ (1.1\times)$ & $26\ (1.7\times)$\\ 
\tc & $59\ (1.0\times)$ & $46\ (1.3\times)$ & $31\ (1.9\times)$\\ 
\bottomrule
\end{tabular}
\caption{\captiontext{\comparenodes}{\comparedalgorithmsbi}} \label{tbl:ddnodes}
\end{table} 

\textbf{Comparison of PP with DD:} This comparison is limited to two cost criteria, since DD cannot handle more than two cost criteria. 

As can be seen in Table~\ref{tbl:ddnodes}, PP and BPP visit less nodes than DD. For the Munich tasks
BPP visits from two up to almost six times less nodes than DD and almost half as many nodes for the
Bavaria tasks. Since there is one node entry for each visited node, BPP can be implemented noticeably
more memory efficient than DD.

\begin{table}[h]
\begin{tabular}{llll} 
\toprule 
& \multicolumn{3}{c}{avg runtime in ms ($\times$ faster than DD)}\\ 
\cmidrule(r){2-4} 
& DD & PP & BPP \\ 
\midrule 
 \multicolumn{4}{c}{\munichtasks} \\ 
 \midrule 
\td & $58\ (1.0\times)$ & $43\ (1.4\times)$ & $32\ (1.8\times)$\\ 
\te & $53\ (1.0\times)$ & $43\ (1.2\times)$ & $36\ (1.5\times)$\\ 
\tl & $83\ (1.0\times)$ & $55\ (1.5\times)$ & $34\ (2.4\times)$\\ 
\tc & $145\ (1.0\times)$ & $72\ (2.0\times)$ & $41\ (3.5\times)$\\ 
\midrule 
 \multicolumn{4}{c}{\bavariatasks} \\ 
 \midrule 
\td & $1118\ (1.0\times)$ & $804\ (1.4\times)$ & $726\ (1.5\times)$\\ 
\te & $1119\ (1.0\times)$ & $883\ (1.3\times)$ & $816\ (1.4\times)$\\ 
\tl & $887\ (1.0\times)$ & $717\ (1.2\times)$ & $573\ (1.5\times)$\\ 
\tc & $1057\ (1.0\times)$ & $945\ (1.1\times)$ & $768\ (1.4\times)$\\ 
\bottomrule
\end{tabular}
\caption{\captiontext{\compareruntime}{\comparedalgorithmsbi}} \label{tbl:ddruntime}
\end{table}

PP also outperforms DD in terms of runtime, as can be seen in Table~\ref{tbl:ddruntime}.
BPP is about twice as fast for the Munich tasks and about 1.5 times faster for the Bavaria tasks.

\begin{table}[h]
\begin{tabular}{llll} 
\toprule 
& \multicolumn{3}{c}{avg \% of nodes visited ($\times$ less than MD)}\\ 
\cmidrule(r){2-4} 
& MD & PP & BPP \\ 
\midrule 
 \multicolumn{4}{c}{\munichtasks} \\ 
 \midrule 
\tel & $100\ (1.0\times)$ & $3.9\ (25.8\times)$ & $2.2\ (46.0\times)$\\ 
\tdc & $100\ (1.0\times)$ & $4.8\ (20.7\times)$ & $2.4\ (41.9\times)$\\ 
\tce & $100\ (1.0\times)$ & $4.8\ (20.7\times)$ & $2.5\ (39.9\times)$\\ 
\tdecl & $100\ (1.0\times)$ & $4.7\ (21.5\times)$ & $2.4\ (41.6\times)$\\ 
\midrule 
 \multicolumn{4}{c}{\bavariatasks} \\ 
 \midrule 
\tel & $100\ (1.0\times)$ & $47\ (2.1\times)$ & $35\ (2.8\times)$\\ 
\tdc & $100\ (1.0\times)$ & $52\ (1.9\times)$ & $38\ (2.6\times)$\\ 
\tce & $100\ (1.0\times)$ & $52\ (1.9\times)$ & $39\ (2.5\times)$\\ 
\tdecl & $100\ (1.0\times)$ & $52\ (1.9\times)$ & $39\ (2.5\times)$\\ 
\bottomrule
\end{tabular}
\caption{\captiontext{\comparenodes}{\comparedalgorithmsmulti}} \label{tbl:mdnodes}
\end{table}

\begin{figure}[h]
\centering
\includegraphics[scale=0.5]{./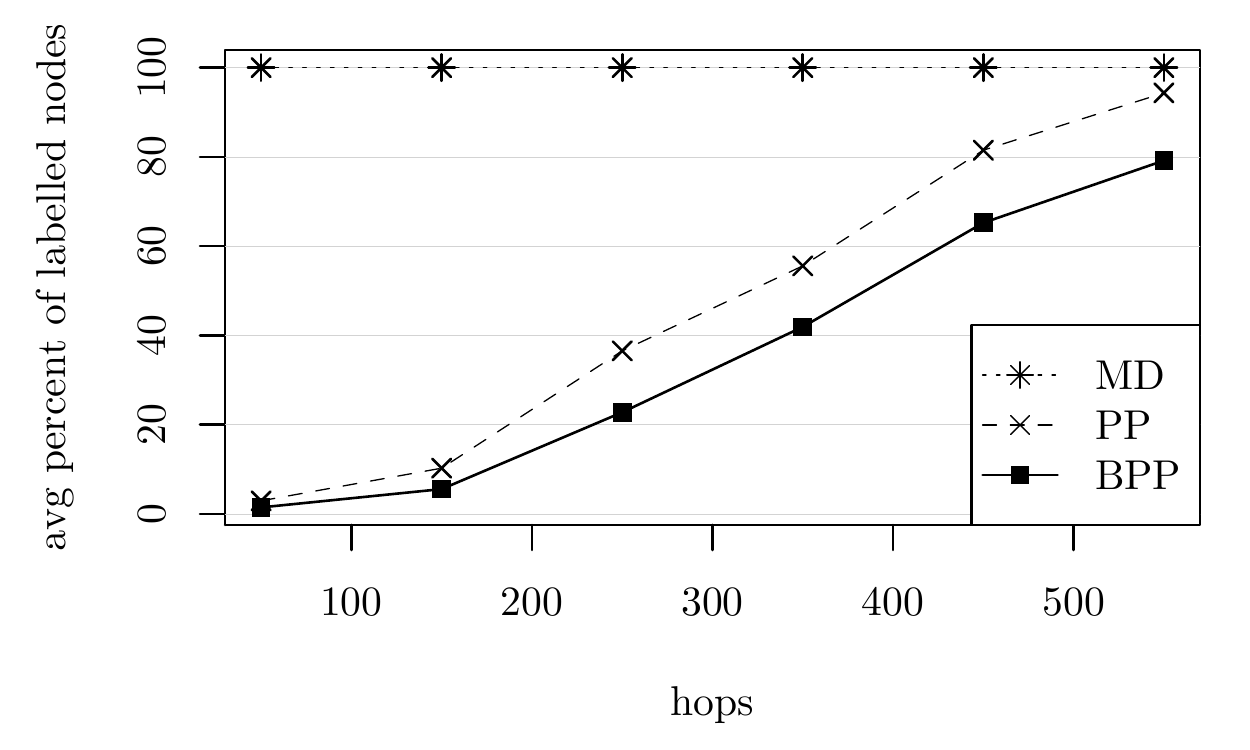}
\caption{Percentage of visited nodes for different number of hops using Munich and Bavaria Routing tasks with \tdecl{}.}
\label{fig:tdeclnodes}
\end{figure}

\vspace{1.5eM}
\textbf{Comparison of PP with MD:} Table~\ref{tbl:mdnodes} shows that PP and BPP visit only a small portion of the graph.
This is extremly important if local searches in very large graphs have to performed
which is a common use case. Even when using five cost criteria, PP and BPP explore 
a fraction of the graph (cf. Figure~\ref{fig:tdeclnodes}). In contrast, MD always visits
the complete graph once for each cost criterion. Figure~\ref{fig:mdshortcomings} illustrates
the visited area of the graph for one of these tasks and implies that
BPP would also work well on gigantic graphs, e.g. consisting of all road networks in the world.

\begin{table}[h]
\begin{tabular}{llll} 
\toprule 
& \multicolumn{3}{c}{avg runtime in ms ($\times$ faster than MD)}\\ 
\cmidrule(r){2-4} 
& MD & PP & BPP \\ 
\midrule 
 \multicolumn{4}{c}{\munichtasks} \\ 
 \midrule 
\tel & $3082\ (1.0\times)$ & $88\ (35.0\times)$ & $58\ (52.4\times)$\\ 
\tdc & $3374\ (1.0\times)$ & $94\ (35.5\times)$ & $58\ (57.7\times)$\\ 
\tce & $3204\ (1.0\times)$ & $102\ (31.2\times)$ & $65\ (49.3\times)$\\ 
\tdecl & $5323\ (1.0\times)$ & $112\ (47.1\times)$ & $70\ (75.6\times)$\\ 
\midrule 
 \multicolumn{4}{c}{\bavariatasks} \\ 
 \midrule 
\tel & $3345\ (1.0\times)$ & $1272\ (2.6\times)$ & $1163\ (2.9\times)$\\ 
\tdc & $3299\ (1.0\times)$ & $1561\ (2.1\times)$ & $1301\ (2.5\times)$\\ 
\tce & $3344\ (1.0\times)$ & $1521\ (2.2\times)$ & $1382\ (2.4\times)$\\ 
\tdecl & $5436\ (1.0\times)$ & $1690\ (3.2\times)$ & $1533\ (3.5\times)$\\ 
\bottomrule
\end{tabular}
\caption{\captiontext{\compareruntime}{\comparedalgorithmsmulti}} \label{tbl:mdruntime}
\end{table}

\begin{figure}[h]
\centering
\includegraphics[scale=0.5]{./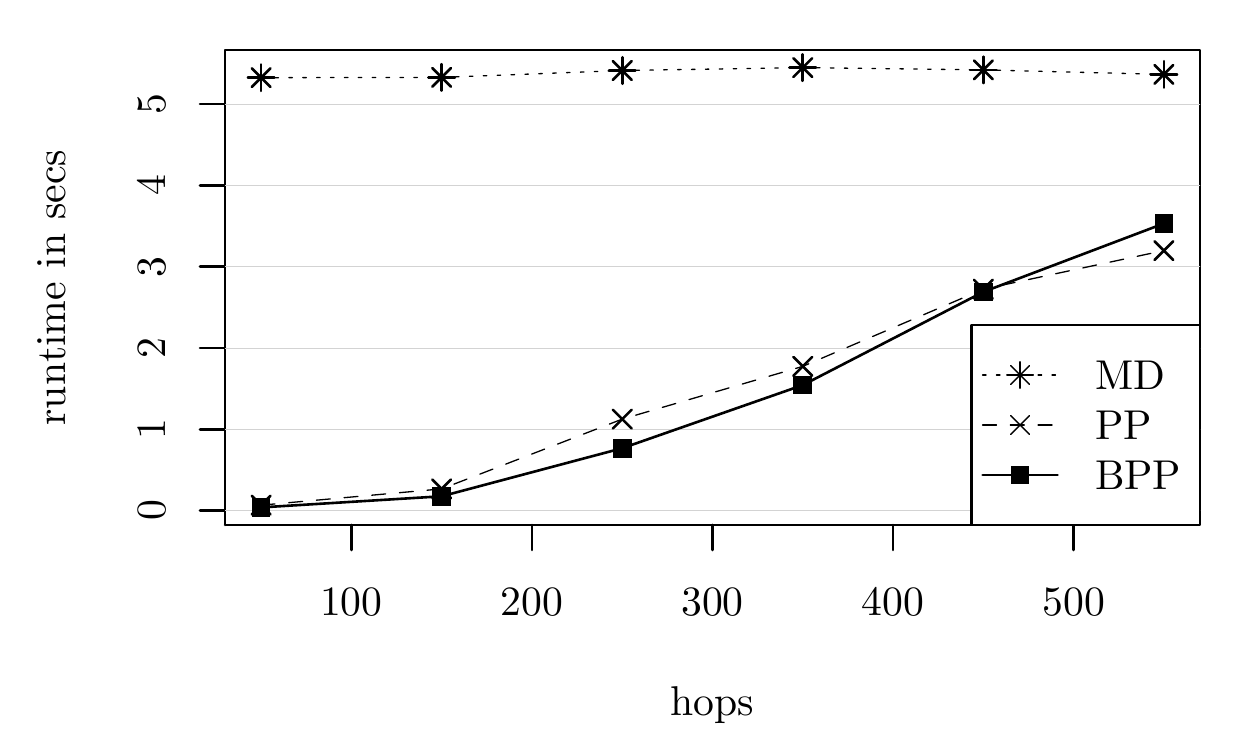}
\caption{Runtime for different number of hops using Munich and Bavaria Routing tasks with \tdecl{}.} \label{fig:tdeclruntime}
\end{figure}

Table~\ref{tbl:mdruntime} shows that PP as well as BPP are much faster than MD.
This of course is due to the limited search space. Using five criteria, BPP is
75.6 times faster than MD for the Munich tasks and 3.5 times faster for the
Bavaria tasks. As can be seen in Figure~\ref{fig:tdeclruntime} PP has the biggest
runtime advantage for local search tasks.

\section{Conclusion}\label{section:summary}
In this paper, we introduce a new efficient algorithm to compute path skyline
queries between two nodes in large multicriteria networks. Our methods
works with multiple dimensions and are suitable for dynamic networks because they do not require any precomputed
information. The core of the proposed is the algorithm ParetoPrep which
efficiently computes all single criterion shortest paths between the two query
nodes and additionally provides lower bound costs for the subgraph being relevant to the query.
The information being computed by ParetoPrep can then be employed by a label
correcting search algorithm like ARSC in two ways. First of all the set of
single criterion shortest paths allows pruning by global domination
from the start. Furthermore, the computed lower bounds provide tight forward
estimations and thus, strongly help to reduce the number of constructed paths.
We show that the results of ParetoPrep fulfill the formal requirements for this
usage. Additionally, we introduce an optimized version of ParetoPrep,
Bidirectional ParetoPrep that further limits the visited part of the graph and
hence, increases the runtime efficiency of path skyline computation. In our
experimental evaluation, we show that global domination is a key technique to
allow fast path skyline computation. Furthermore, we demonstrate that ParetoPrep
considerably outperforms other methods which compute lower bounds like Reference
Node Embedding and the state of the art multiple Dijkstra searches.

For future work, we investigate more complex versions of multicost networks
having uncertain and time-dependent cost values. Furthermore, we will investigate 
the combination of ParetoPrep with hierarchical networks for long distance
skyline computation.
\balance
\bibliographystyle{abbrv}
\bibliography{abbreviations,publication,literature}

\end{document}